\newtheorem{lem}{Lemma}[section]
\newtheorem{thm}[lem]{Theorem}
\newcounter{constraint}
\title{Scheduling Coflows for Minimizing the Total Weighted Completion Time in Heterogeneous Parallel Networks}
\author{Chi-Yeh~Chen % <-this % stops a space
\\ Department of Computer Science and Information
Engineering, \\ National Cheng Kung University, \\
Taiwan, ROC. \\
chency@csie.ncku.edu.tw.}
\begin{document}

\maketitle
\begin{abstract}
Coflow is a network abstraction used to represent communication patterns in data centers. The coflow scheduling problem in large data centers is one of the most important $NP$-hard problems. Many previous studies on coflow scheduling mainly focus on the single-core model. However, with the growth of data centers, this single-core model is no longer sufficient. This paper considers the coflow scheduling problem in heterogeneous parallel networks. The heterogeneous parallel network is an architecture based on multiple network cores running in parallel. In this paper, two polynomial-time approximation algorithms are developed for scheduling divisible and indivisible coflows in heterogeneous parallel networks, respectively. Considering the divisible coflow scheduling problem, the proposed algorithm achieve an approximation ratio of $O(\log m/ \log \log m)$ with arbitrary release times, where $m$ is the number of network cores. On the other hand, when coflow is indivisible, the proposed algorithm achieve an approximation ratio of $O\left(m\left(\log m/ \log \log m\right)^2\right)$ with arbitrary release times.

\begin{keywords}
Scheduling algorithms, approximation algorithms, coflow, datacenter network, heterogeneous parallel network.
\end{keywords}
\end{abstract}

\section{Introduction}\label{sec:Introduction}
With the rapid development of cloud computing, large data centers have become the main computing infrastructure.
In large data centers, structured traffic patterns of distributed applications have demonstrated the benefits of application-aware network scheduling~\cite{Chowdhury2014, Chowdhury2015, Zhang2016, Agarwal2018}. 
In addition, the success of data-parallel computing applications such as MapReduce~\cite{Dean2008}, Hadoop~\cite{Shvachko2010, borthakur2007hadoop}, Dryad~\cite{isard2007dryad} and Spark~\cite{zaharia2010spark} has led to a proliferation of related applications~\cite {dogar2014decentralized, chowdhury2011managing}.
In data-parallel computing applications, computations are only processed locally on the machine. However, intermediate data (flows) generated during the computation stage need to be transmitted across different machines during the communication stage for further processing.
Due to the large number of applications, the data center must have sufficient data transmission and scheduling capabilities.
In data transfer for data-parallel computing applications, the interaction of all flows between two groups of machines becomes important.
This collective communication pattern in the data center is abstracted by coflow traffic~\cite{Chowdhury2012}.

Many previous studies on coflow scheduling mainly focus on the single-core model. However, with the growth of data centers, this single-core model is no longer sufficient. In fact, a growing data center will have legacy and new systems coexisting. To improve the efficiency of the network, there will be different generations of network cores running in parallel~\cite{Singh2015, Huang2020}. Therefore, we consider an architecture based on multiple heterogeneous network cores running in parallel (heterogeneous parallel network). The goal of this paper is to schedule coflows in the heterogeneous parallel networks such that the total weighted completion time is minimized. This paper will discuss two problems: the indivisible coflow scheduling problem and the divisible coflow scheduling problem.
In the indivisible coflow scheduling problem, flows in a coflow can only be arranged in the same network core.
However, in the divisible coflow scheduling problem, the flows in a coflow can be arranged in different network cores.

\subsection{Related Work}
The coflow abstraction was first introduced by Chowdhury and Stoica~\cite{Chowdhury2012} to capture communication patterns in data centers.
The coflow scheduling problem generalizes the well-studied concurrent open shop scheduling problem, which has
been shown to be strongly NP-hard~\cite{chen2007supply, garg2007order, leung2007scheduling, mastrolilli2010minimizing, wang2007customer}.
Therefore, we instead look for efficient approximation algorithms rather than exact algorithms.
Since the concurrent open shop problem is NP-hard to approximate within a factor better than $2-\epsilon$ for any $\epsilon>0$~\cite{Sachdeva2013, shafiee2018improved}, the coflow scheduling problem is also NP-hard to approximate within a factor better than $2-\epsilon$~\cite{ahmadi2020scheduling, Bansal2010, Sachdeva2013}.
Since the introduction of the coflow abstraction, many related investigations have been carried out to schedule coflows, e.g. \cite{Chowdhury2014, Chowdhury2015, Qiu2015, zhao2015rapier, shafiee2018improved, ahmadi2020scheduling}.
The first polynomial-time deterministic approximation algorithm was developed by Qiu~\textit{et al.}~\cite{Qiu2015}.
Since then, a series of improved approximation algorithms have been proposed~\cite{Qiu2015, khuller2016brief, shafiee2018improved, ahmadi2020scheduling} and the best approximation ratio achievable in polynomial time has improved from $\frac{64}{3}$ to 4.
Moreover, the best approximation ratio has been improved from $\frac{76}{3}$ to 5, taking into account arbitrary release times.
When each job has multiple coflows and there is a priority order between coflows, Shafiee and Ghaderi~\cite{shafiee2021scheduling} proposed a polynomial-time algorithm with approximation ratio of $O(\mu \log(N)/\log(\log(N)))$, where $\mu$ is the maximum number of coflows in a job and $N$ is the number of servers.
In scheduling a single coflow problem on a heterogeneous parallel network, Huang \textit{et al.}~\cite{Huang2020} proposed an $O(m)$-approximation algorithm, where $m$ is the number of network cores.

\subsection{Our Contributions}
This paper considers the coflow scheduling problem in heterogeneous parallel networks. Our results are as follows:
\begin{itemize}
\item In the indivisible coflow scheduling problem, we propose a $O\left(m\left(\log m/ \log \log m\right)^2\right)$-approximation algorithm with arbitrary release times.

\item In the divisible coflow scheduling problem, we also propose a $O(\log m/ \log \log m)$-approximation algorithm with arbitrary release times. 
\end{itemize}

\subsection{Organization}
The rest of this article is organized as follows. Section \ref{sec:Preliminaries} introduces basic notations and preliminaries. Section \ref{sec:Algorithm2} presents an algorithm for indivisible coflow scheduling. Section \ref{sec:Algorithm1} presents an algorithm for divisible coflow scheduling. Section~\ref{sec:Results} compares the performance of the previous algorithms with that of the proposed algorithm. Section \ref{sec:Conclusion} draws conclusions.
%%%%%%%%%%%%%%%%%%%%%%%%%%%%%%%%%%%%%%%%%%%%%%%%%%%%%%%%%%%%%%
% section Preliminaries
%%%%%%%%%%%%%%%%%%%%%%%%%%%%%%%%%%%%%%%%%%%%%%%%%%%%%%%%%%%%%%
\section{Notation and Preliminaries}\label{sec:Preliminaries}
Given a set of coflows $\mathcal{F}$ and a set of heterogeneous network cores $\mathcal{M}$, the coflow scheduling problem asks for a minimum total weighted completion time, where each coflow has its release time and its positive weight. The heterogeneous parallel networks can be abstracted as a set $\mathcal{M}$ of $m$ giant $N \times N$ non-blocking switchs, with $N$ input links connect to $N$ source servers and $N$ output links connect to $N$ destination servers. Each switch represents a network core. In each network core, all links are assumed to have the same capacity.
Let $s_k$ be the link speed of network core $k$. Each source server or each destination server has $m$ simultaneous links connected to each network core.
Let $\mathcal{I}$ be the source server set and $\mathcal{J}$ be the destination server set. The network core can be seen as a bipartite graph, with $\mathcal{I}$ on one side and $\mathcal{J}$ on the other side. 

A coflow consists of a set of independent flows whose completion time is determined by the last completed flow in the set.
We can use a $N\times N$ demand matrix $D^{(f)}=(d_{ijf})_{i,j=1}^{N}$ to represent the coflow $f\in \mathcal{F}$ where $d_{ijf}$ denote the size of the flow to be transferred from input $i$ to output $j$ in coflow $f$.
We also can use a triple $(i, j, f)$ to represent a flow, where $i \in \mathcal{I}$, $j \in \mathcal{J}$ and $f \in \mathcal{F}$.
For simplicity, we assume that all flows in a coflow arrive at the system at the same time (as shown in~\cite{Qiu2015}).
Let $w_{f}$, $r_f$ and $C_f$ denote the weight of coflow $f$, the release time of coflow $f$ and the completion time of coflow $f$, respectively.
The goal is to minimize the total weighted completion time of the coflow $\sum_{f\in \mathcal{F}} w_{f}C_{f}$.
We consider two problems: the indivisible coflow scheduling problem and the divisible coflow scheduling problem.
In the indivisible coflow scheduling problem, flows in a coflow can only be arranged in the same network core.
However, in the divisible coflow scheduling problem, the flows in a coflow can be arranged in different network cores.

%%%%%%%%%%%%%%%%%%%%%%%%%%%%%%%%%%%%%%%%%%%%%%%%%%%%%%%%%%%%%%
% section The algorithm
%%%%%%%%%%%%%%%%%%%%%%%%%%%%%%%%%%%%%%%%%%%%%%%%%%%%%%%%%%%%%%
%\section{Algorithm}\label{sec:Algorithm3}
\section{Approximation Algorithm for Indivisible Coflow Scheduling}\label{sec:Algorithm2}
In this section, as in~\cite{li2020scheduling}, we first give $O\left(m\left(\log m/ \log \log m\right)^2\right)$-approximation to minimize the makepan scheduling problem on heterogeneous network cores. We then convert the goal of minimizing the total weighted completion time to minimizing makespan at the loss of a constant factor.
For every coflow $f$ and input port $i$, let $L_{if}=\sum_{j=1}^{N}d_{ijf}$ be the total amount of data that coflow $f$ needs to transmit through the input port $i$. Moreover, let $L_{jf}=\sum_{i=1}^{N}d_{ijf}$ be the total amount of data that coflow $f$ needs to transmit through the output port $j$.
We can formulate our problem as the following linear programming relaxation.

\begin{subequations}\label{incoflow:main}
\begin{align}
&    \text{min} && E                           && \tag{\ref{incoflow:main}}\\ 
&    \text{s.t.}. && \sum_{k\in \mathcal{M}} x_{kf}=1 && \forall f\in \mathcal{F} \label{incoflow:a}\\ 
&         && \sum_{k\in \mathcal{M}} \frac{x_{kf}L_{if}}{s_k}\leq C_{f}  && \forall f\in \mathcal{F}, \forall i\in \mathcal{I} \label{incoflow:b}\\
&         && \sum_{k\in \mathcal{M}} \frac{x_{kf}L_{jf}}{s_k}\leq C_{f}  && \forall f\in \mathcal{F}, \forall j\in \mathcal{J} \label{incoflow:c}\\
&         && \frac{1}{s_k}\sum_{f\in \mathcal{F}} x_{kf}L_{if}  \leq E && \forall i\in \mathcal{I}, \forall k\in \mathcal{M} \label{incoflow:f}\\ 
&         && \frac{1}{s_k}\sum_{f\in \mathcal{F}} x_{kf}L_{jf}  \leq E && \forall j\in \mathcal{J}, \forall k\in \mathcal{M} \label{incoflow:g}\\ 
&         && C_{f}\leq E && \forall f\in \mathcal{F} \label{incoflow:h}\\ 
&         && x_{kf}, C_{f}\geq 0 && \forall f\in \mathcal{F}, \forall k\in \mathcal{M} \label{incoflow:i}
\end{align}   
\end{subequations}

In the LP (\ref{incoflow:main}), $x_{kf}$ indicates whether coflow $f$ is scheduled on network core $k$, $E$ is the makespan of the schedule and $C_{f}$ is the completion time of coflow $f$ in the schedule.
The constraint~(\ref{incoflow:a}) requires scheduling for each coflow $f$.
The constraint~(\ref{incoflow:b}) (similarly the constraint~(\ref{incoflow:c})) is that the completion time of $f$ occurring on input port $i$ (output port $j$) is at least the transfer time on the network cores allocated to it. 
The constraint~(\ref{incoflow:f}) (similarly the constraint~(\ref{incoflow:g})) says that, for each network core $k$ and input port $i$ (output port $j$), the makespan $E$ is at least the total transfer time that occurs on input port $i$ (output port $j$) in all coflows assigned to $k$.
The constraint~(\ref{incoflow:h}) states that the makespan $E$ is at least the completion time of any coflow $f$.
The constraint~(\ref{incoflow:i}) requires the $x$ and $C$ variables to be non-negative.

The following rounding method follows the method proposed by Li~\cite{li2020scheduling}.
The optimal solution to LP (\ref{incoflow:main}) is the lower bound on the makespan of any valid schedule.
We assume that $m$ is large enough. 
Given an instance of the makepan scheduling problem, we will first preprocess the instance like~\cite{li2020scheduling} to contain only a small number of groups.
The first stage of the pre-processing step discards all network cores that are at most $1/m$ times the speed of the fastest network core.
Since there are $m$ network cores, the total speed of discarded network cores is at most that of the fastest network core.
That is, for the same amount of transferred data, the transfer time of the fastest network core is at most the transfer time of using all discarded network cores; however, this will increase the makepan by a factor of 2.
So we can move the $x$-values of all discarded network cores to the $x$-values of the fastest network cores.
Therefore, we can assume that all network cores are faster than $1/m$ times the speed of the fastest network core. We also can normalize the speed of all network cores to $[1,m)$.

The second stage of the pre-processing step divides the network cores into groups, each group containing network cores of similar speed.
Let $\gamma= \log m/\log \log m$.
The network cores are divided into $K$ groups $M_1, M_2, \ldots, M_K$, where $M_k$ contains network cores with speed in $[\gamma^{k-1},\gamma^{k})$ and $K= \left\lceil \log_{\gamma} m \right\rceil= O(\log m/\log \log m)$.
For a subset $M_k\subseteq M$ of network cores, let 
\begin{eqnarray*}
s(M_k)=\sum_{u\in M_k} s_u
\end{eqnarray*}
be the total speed of network cores in $M_k$. For $M_k\subseteq M$ and $f\in \mathcal{F}$, let
\begin{eqnarray*}
x_{M_kf}=\sum_{u\in M_k} x_{uf}
\end{eqnarray*}
be the total fraction of coflow $f$ assigned to network cores in $M_k$. For any coflow $f$, let $\ell_f$ be the largest integer $\ell$ such that $\sum_{k=\ell}^{K} x_{M_kf}\geq 1/2$.
That is, the largest group index $\ell$ such that at least 1/2 of $f$ is allocated to network cores in groups $M_\ell, \ldots, M_K$.
Then, let $r(f)$ be the index $k\in [\ell_f, K]$ that maximizes $s(M_k)$. That is, $r(f)$ is the index of the group with the highest total speed among the groups $\ell_f$ to $K$. Using the $r(f)$ value, we can run the list algorithm (\textbf{Algorithm}~\ref{Alg1}) to get the index of the allocated network core. The algorithm is to find the least loaded network core and assign coflow to it. The proposed algorithm has the following lemmas.

\begin{algorithm}
\caption{coflow-makespan-list-scheduling}
    \begin{algorithmic}[1]
		    \REQUIRE two vectors $\bar{C}\in \mathbb{R}_{\scriptscriptstyle \geq 0}^{n}$ and $\bar{x}\in \mathbb{R}_{\scriptscriptstyle \geq 0}^{n\times m}$
				\STATE let $load_{I}(i,h)$ be the load on the $i$-th input port of the network core $h$
				\STATE let $load_{O}(j,h)$ be the load on the $j$-th output port of the network core $h$
				\STATE let $\mathcal{A}_h$ be the set of coflows allocated to network core $h$
				\STATE both $load_{I}$ and $load_{O}$ are initialized to zero and $\mathcal{A}_h=\emptyset$ for all $h\in [1, m]$
				\FOR{every coflow $f\in \mathcal{F}$ in non-decreasing order of $\bar{C}_f$, breaking ties arbitrarily}
				    \STATE $x_{M_kf}=\sum_{u\in M_k} x_{uf}$ for all $k=1,\ldots, K$
				    \STATE $\ell_f=\max_{\ell\in [1,K]} \ell$ s.t. $\sum_{k=\ell}^{K} x_{M_kf}\geq 1/2$
						\STATE $r(f)=\arg_{M_k:\ell_f\leq k\leq K} \max s(M_k)$
						\STATE $h^*=\arg \min_{h\in M_{r(f)}}\frac{1}{s_h}\left(\max_{i,j\in [1,N]}load_{I}(i,h)+\right.$ $\left.load_{O}(j,h)+L_{if}+L_{jf}\right)$
						\STATE $\mathcal{A}_{h^*}=\mathcal{A}_{h^*}\cup \left\{f\right\}$
						\STATE $load_{I}(i,h^*)=load_{I}(i,h^*)+L_{if}$ and $load_{O}(j,h^*)=load_{O}(j,h^*)+L_{jf}$ for all $i,j\in [1,N]$
				\ENDFOR
   \end{algorithmic}
\label{Alg1}
\end{algorithm}

\begin{lem}\label{lem:lem2}
For any port $i\in \mathcal{I}$, we have $\sum_{f\in \mathcal{F}}\frac{L_{if}}{s(M_{r(f)})}\leq 2KE$.
\end{lem}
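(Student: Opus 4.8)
The plan is to pass from the LP's per-core load constraints to an aggregate load bound at the level of \emph{groups}, and then combine the two defining properties of $\ell_f$ and $r(f)$ in a single charging step. The factor $2$ in the target bound will come from the ``at least $1/2$'' threshold in the definition of $\ell_f$, while the factor $K$ will come from summing one contribution per group.

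First I would rewrite constraint~(\ref{incoflow:f}) for a single core $u\in\mathcal{M}$ as $\sum_{f\in\mathcal{F}} x_{uf} L_{if} \le s_u E$, and sum it over all cores $u$ belonging to a fixed group $M_k$. Since $\sum_{u\in M_k} x_{uf} = x_{M_k f}$ and $\sum_{u\in M_k} s_u = s(M_k)$, this yields the key per-group estimate
\[
\sum_{f\in\mathcal{F}} L_{if}\, x_{M_k f} \;\le\; s(M_k)\, E \qquad \text{for every } i\in\mathcal{I} \text{ and } k\in[1,K].
\]

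Next I would fix a coflow $f$. By the definition of $\ell_f$ we have $\sum_{k=\ell_f}^{K} x_{M_k f} \ge 1/2$, hence $1 \le 2\sum_{k=\ell_f}^{K} x_{M_k f}$; multiplying through by $L_{if}/s(M_{r(f)})$ gives
\[
\frac{L_{if}}{s(M_{r(f)})} \;\le\; \frac{2 L_{if}}{s(M_{r(f)})}\sum_{k=\ell_f}^{K} x_{M_k f}.
\]
Because $r(f)$ is chosen to maximize $s(M_k)$ over $k\in[\ell_f,K]$, we have $s(M_{r(f)}) \ge s(M_k)$ for every such $k$, so each term obeys $x_{M_k f}/s(M_{r(f)}) \le x_{M_k f}/s(M_k)$; extending the sum to all $k\in[1,K]$ only increases it, so
\[
\frac{L_{if}}{s(M_{r(f)})} \;\le\; 2 \sum_{k=1}^{K} \frac{L_{if}\, x_{M_k f}}{s(M_k)}.
\]

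Finally I would sum over all $f\in\mathcal{F}$, interchange the order of summation, and apply the per-group estimate to each inner sum:
\[
\sum_{f\in\mathcal{F}} \frac{L_{if}}{s(M_{r(f)})} \;\le\; 2\sum_{k=1}^{K} \frac{1}{s(M_k)}\sum_{f\in\mathcal{F}} L_{if}\, x_{M_k f} \;\le\; 2\sum_{k=1}^{K} \frac{s(M_k) E}{s(M_k)} \;=\; 2KE,
\]
which is the claim. I do not anticipate a serious obstacle: once the per-group constraint is in hand the argument is essentially mechanical. The only point requiring care is aligning the two inequalities correctly---using the $1/2$ lower bound from $\ell_f$ to absorb the factor $2$ that lets a single denominator $s(M_{r(f)})$ be replaced by the full sum over the $s(M_k)$, and verifying that $r(f)$ indeed dominates the speed of every group in the range $[\ell_f,K]$, which is precisely how $r(f)$ is defined.
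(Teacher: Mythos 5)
Your proof is correct and follows essentially the same route as the paper's: bound $1/s(M_{r(f)})$ by $2\sum_{k=1}^{K} x_{M_k f}/s(M_k)$ using the $1/2$ threshold defining $\ell_f$ together with the maximality of $s(M_{r(f)})$ over $[\ell_f,K]$, then swap the order of summation and apply constraint~(\ref{incoflow:f}) aggregated over each group. In fact your write-up is slightly more careful than the paper's, which states the intermediate bound with an ambiguous denominator $\frac{1}{2s(M_k)}$ where $\frac{1}{2s(M_{r(f)})}$ is meant and leaves the dominance $s(M_{r(f)})\geq s(M_k)$ implicit.
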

\begin{proof}
Since $\sum_{k=\ell_f}^{K}x_{M_k,f}\geq \frac{1}{2}$ for any coflow $f$, we have 
\begin{eqnarray*}
\sum_{k=1}^{K}\frac{x_{M_kf}}{s(M_k)}\geq \sum_{k=\ell_f}^{K}\frac{x_{M_kf}}{s(M_k)}\geq \frac{1}{2s(M_k)}.
\end{eqnarray*}
According to the above inequality, we have
\begin{eqnarray*}
\sum_{f\in \mathcal{F}} \frac{L_{if}}{s(M_{r(f)})} & \leq & 2\sum_{f\in \mathcal{F}}L_{if}\sum_{k=1}^{K}\frac{x_{M_kf}}{s(M_{k})} \\
   & = & 2\sum_{k=1}^{K}\frac{1}{s(M_{k})}\sum_{f\in \mathcal{F}}L_{if}x_{M_kf} \\
	 & \leq & 2\sum_{k=1}^{K} E \\
	 & = & 2KE.
\end{eqnarray*}
The last inequality is due to constraint~(\ref{incoflow:f}). Since $\sum_{f\in F} x_{kf}L_{if} \leq s_k E$ for every $i\in M_k$, we have $\sum_{f\in \mathcal{F}}L_{if}x_{M_k,f}\leq s(M_{k})E$
\end{proof}

\begin{lem}\label{lem:lem3}
For any port $j\in \mathcal{J}$, we have $\sum_{f\in \mathcal{F}}\frac{L_{jf}}{s(M_{r(f)})}\leq 2KE$.
\end{lem}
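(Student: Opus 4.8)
The plan is to mirror the argument for Lemma~\ref{lem:lem2} verbatim, exchanging the role of input ports for output ports. Every step in that proof used only properties of $\ell_f$ and $r(f)$ together with the input-port constraint~(\ref{incoflow:f}), and each of these has an exact analogue on the output side. In particular, the key pointwise estimate is purely combinatorial and does not reference ports at all, so it transfers unchanged; the only substantive swap occurs in the final bounding step.

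First I would re-establish the coflow-wise inequality $\frac{1}{s(M_{r(f)})}\leq 2\sum_{k=1}^{K}\frac{x_{M_kf}}{s(M_k)}$ for each $f\in\mathcal{F}$. This holds because $r(f)$ is chosen to maximize $s(M_k)$ over $k\in[\ell_f,K]$, so $s(M_{r(f)})\geq s(M_k)$ for every such $k$. Dropping the terms with $k<\ell_f$ and bounding each surviving denominator below by $s(M_{r(f)})$ gives $\sum_{k=1}^{K}\frac{x_{M_kf}}{s(M_k)}\geq \frac{1}{s(M_{r(f)})}\sum_{k=\ell_f}^{K}x_{M_kf}\geq \frac{1}{2s(M_{r(f)})}$, where the last step invokes the defining property $\sum_{k=\ell_f}^{K}x_{M_kf}\geq 1/2$. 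This is identical to the inequality used in Lemma~\ref{lem:lem2} and is independent of whether we look at input or output ports.

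Next I would multiply through by $L_{jf}$, sum over all coflows, and interchange the order of summation to get $\sum_{f\in\mathcal{F}}\frac{L_{jf}}{s(M_{r(f)})}\leq 2\sum_{k=1}^{K}\frac{1}{s(M_k)}\sum_{f\in\mathcal{F}}L_{jf}x_{M_kf}$. The remaining task is to bound the inner per-group sum by $s(M_k)E$. This is exactly where the output-port constraint~(\ref{incoflow:g}) takes the place of~(\ref{incoflow:f}): summing $\sum_{f\in\mathcal{F}}x_{uf}L_{jf}\leq s_uE$ over all cores $u\in M_k$ yields $\sum_{f\in\mathcal{F}}L_{jf}x_{M_kf}\leq s(M_k)E$. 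Substituting this collapses the $k$-sum to $2\sum_{k=1}^{K}E=2KE$, as required.

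There is essentially no genuine obstacle here, since the lemma is the output-port twin of Lemma~\ref{lem:lem2}. The single point demanding care is invoking constraint~(\ref{incoflow:g}) rather than~(\ref{incoflow:f}) in the final bound, as that is the only place in the argument where the input/output distinction actually enters.
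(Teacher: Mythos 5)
Your proof is correct and is exactly the argument the paper intends: the paper's own proof of Lemma~\ref{lem:lem3} consists of the single remark that it is ``similar to that of Lemma~\ref{lem:lem2},'' and your write-up is precisely that mirrored argument, with the output-port constraint~(\ref{incoflow:g}) substituted for the input-port constraint~(\ref{incoflow:f}) in the final bound. One negligible wording slip: the surviving denominators $s(M_k)$, $k\in[\ell_f,K]$, are bounded \emph{above} (not below) by $s(M_{r(f)})$, which is what bounds the fractions below, and your displayed inequality chain already states this correctly.
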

\begin{proof}
The proof is similar to that of lemma~\ref{lem:lem2}.
\end{proof}

\begin{lem}\label{lem:lem4}
Let $\bar{E}$ be an optimal solution to the linear program (\ref{incoflow:main}), and let $\tilde{E}$ denote the makespan in the schedule found by coflow-makespan-list-scheduling. We have
\begin{eqnarray*}
\tilde{E}\leq 4m\gamma K\bar{E}.
\end{eqnarray*}
\end{lem}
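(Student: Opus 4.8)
The plan is to trace the makespan back to a single critical network core and then charge its port loads against the per-port bounds already established in Lemmas~\ref{lem:lem2} and~\ref{lem:lem3}. First I would fix the core $h^*$ and the group index $k$ (the group to which $h^*$ belongs) on which the makespan $\tilde{E}$ is attained. The flows routed to $h^*$ form a single aggregate demand matrix whose row and column sums are exactly the final loads $load_I(\cdot,h^*)$ and $load_O(\cdot,h^*)$, so a Birkhoff--von Neumann / open-shop style decomposition bounds the time to clear core $h^*$ by $\tilde{E}\le \frac{1}{s_{h^*}}\left(\max_i load_I(i,h^*)+\max_j load_O(j,h^*)\right)$. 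This reduces the lemma to controlling the largest input- and output-port loads on $h^*$.

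Next I would observe that every coflow placed on $h^*$ satisfies $r(f)=k$, because Algorithm~\ref{Alg1} only assigns $f$ to a core of $M_{r(f)}$. Hence the load on any port of $h^*$ is at most the total load carried by the whole group $M_k$ on that port, i.e.\ $\max_i load_I(i,h^*)\le \max_i\sum_{f:\,r(f)=k}L_{if}$, and symmetrically on the output side. This is the only place where I deliberately discard the balancing produced by the greedy step in line~9, and it is precisely what ultimately yields the factor $m$.

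Then I would restrict Lemmas~\ref{lem:lem2} and~\ref{lem:lem3} to the coflows with $r(f)=k$. On this restricted sum the quantity $s(M_{r(f)})=s(M_k)$ is constant, so the two lemmas give $\sum_{f:\,r(f)=k}L_{if}\le 2KE\,s(M_k)$ for every input port $i$ and $\sum_{f:\,r(f)=k}L_{jf}\le 2KE\,s(M_k)$ for every output port $j$. Combining with the previous step produces $\tilde{E}\le \frac{4KE\,s(M_k)}{s_{h^*}}$, where the $4$ is the sum of one $2K$ from the input side and one from the output side.

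Finally I would invoke the speed grouping. Every core in $M_k$ has speed below $\gamma^{k}$ and there are at most $m$ of them, so $s(M_k)<m\gamma^{k}$, while $s_{h^*}\ge \gamma^{k-1}$ because $h^*\in M_k$; hence $s(M_k)/s_{h^*}<m\gamma$ and $\tilde{E}<4m\gamma KE=4m\gamma K\bar{E}$. I expect the main obstacle to be the very first step, namely justifying rigorously that a switch of speed $s_{h^*}$ carrying the aggregated demand can be cleared within $\left(\max_i load_I(i,h^*)+\max_j load_O(j,h^*)\right)/s_{h^*}$, that is, that the greedy objective of line~9 is a legitimate upper bound on the realizable makespan of a core; once that is in hand, everything afterward is a direct substitution into Lemmas~\ref{lem:lem2}--\ref{lem:lem3} together with the speed-interval inequalities.
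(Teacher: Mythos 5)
Your proposal is correct and follows essentially the same route as the paper's own proof: trace the makespan to the critical core, bound it by the sum of that core's input- and output-port loads divided by its speed, charge those loads against Lemmas~\ref{lem:lem2} and~\ref{lem:lem3}, and convert the ratio $s(M_k)/s_{h^*}$ into the factor $m\gamma$ via the group speed bounds $s(M_k)<m\gamma^{k}$ and $s_{h^*}\geq\gamma^{k-1}$. If anything, your version is tighter in its bookkeeping: by restricting the sums to coflows with $r(f)=k$ you make rigorous the paper's inequality $s(M_{r(f)})/s_k\leq m\gamma$, which as written is applied to all $f\in\mathcal{F}$ but is only justified for coflows whose group contains the critical core --- exactly the restriction you impose.
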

\begin{proof}
Assume the last completed flow in a coflow $f$ is sent via link $(i, j)$ of network core $k$. We have
\begin{eqnarray}\label{eq:1}
\tilde{E} & \leq & \sum_{f\in \mathcal{F}} \frac{L_{if}+L_{jf}}{s_k} \label{eq:1-1}\\
          & =    & \sum_{f\in \mathcal{F}} \frac{s(M_{r(f)})}{s_k}\frac{L_{if}+L_{jf}}{s(M_{r(f)})} \label{eq:1-2}\\
          & \leq & m\gamma \sum_{f\in \mathcal{F}} \frac{L_{if}+L_{jf}}{s(M_{r(f)})} \label{eq:1-3}\\
          & \leq & 4m\gamma K\bar{E} \label{eq:1-4}
\end{eqnarray}
The inequality~(\ref{eq:1-1}) is due to the worst case where all coflows are assigned to the same core.
The inequality~(\ref{eq:1-3}) is dut to $\frac{s(M_{r(f)})}{s_k} \leq m \gamma$.
The inequality~(\ref{eq:1-4}) is based on lemma~\ref{lem:lem2} and lemma~\ref{lem:lem3}.
\end{proof}

According to lemma~\ref{lem:lem4}, we have the following theorem:
\begin{thm}\label{thm:thm1}
When the speed of network core is between one and $m$, the indivisible coflow schedule has makespan at most $4m\gamma K\bar{E}=O\left(m\left(\log m/ \log \log m\right)^2\right)\bar{E}$.
\end{thm}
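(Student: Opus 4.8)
The plan is to derive the theorem as an immediate corollary of Lemma~\ref{lem:lem4}, which already performs all of the real work by establishing the explicit bound $\tilde{E} \leq 4m\gamma K\bar{E}$ on the makespan produced by coflow-makespan-list-scheduling. All that remains is to substitute the values of $\gamma$ and $K$ chosen in the preprocessing step and to rewrite the resulting product in asymptotic form. Consequently I anticipate no serious difficulty here, and the single quantity that merits a careful check is the bound on $K$; the real content lives in Lemma~\ref{lem:lem4} (and, beneath it, in Lemmas~\ref{lem:lem2} and~\ref{lem:lem3}).

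First I would recall the quantities fixed before the algorithm: $\gamma = \log m/\log\log m$, and, after the preprocessing that normalizes all surviving speeds to $[1,m)$, the network cores are partitioned into $K = \lceil \log_\gamma m \rceil$ groups of geometrically increasing speed. Next I would confirm the estimate $K = O(\log m/\log\log m)$ already asserted in that discussion: writing $\log_\gamma m = \log m/\log\gamma$ and noting that $\log\gamma = \log\log m - \log\log\log m = \Theta(\log\log m)$ for sufficiently large $m$, the claimed bound on $K$ follows. This is the only place where an asymptotic estimate genuinely has to be argued rather than quoted.

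Finally I would combine these facts with Lemma~\ref{lem:lem4}. Since both $\gamma$ and $K$ are $O(\log m/\log\log m)$, their product with the leading factor $4m$ satisfies
\begin{eqnarray*}
4m\gamma K & = & O\left(m\left(\frac{\log m}{\log\log m}\right)^2\right),
\end{eqnarray*}
so that $\tilde{E} \leq 4m\gamma K\bar{E} = O\left(m(\log m/\log\log m)^2\right)\bar{E}$, which is precisely the statement of the theorem. The hypothesis that the speeds lie between one and $m$ is exactly the normalized regime guaranteed by the preprocessing, so it is automatically in force when Lemma~\ref{lem:lem4} is applied.
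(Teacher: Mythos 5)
Your proposal is correct and matches the paper's own treatment: the paper derives Theorem~\ref{thm:thm1} directly from Lemma~\ref{lem:lem4} (``According to lemma~\ref{lem:lem4}, we have the following theorem''), using the previously fixed values $\gamma=\log m/\log\log m$ and $K=\lceil \log_\gamma m\rceil=O(\log m/\log\log m)$. Your additional verification that $\log\gamma=\Theta(\log\log m)$, which justifies the bound on $K$, is a detail the paper leaves implicit but does not change the argument.
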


Due to the discarded network cores, we have the following theorem:
\begin{thm}\label{thm:thm2}
When the speed of network core is arbitrary, the indivisible coflow schedule has makespan at most $8m\gamma K\bar{E}=O\left(m\left(\log m/ \log \log m\right)^2\right)\bar{E}$.
\end{thm}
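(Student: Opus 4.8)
The plan is to reduce the arbitrary-speed instance to the bounded-speed instance already handled by Theorem~\ref{thm:thm1}, paying only a constant factor for the reduction. Following the pre-processing described before the algorithm, I would let $s_{\max}$ denote the speed of the fastest core and split $\mathcal{M}$ into the \emph{fast} cores (speed exceeding $s_{\max}/m$) and the \emph{slow} cores (speed at most $s_{\max}/m$). Since $|\mathcal{M}|=m$, the total speed of the slow cores is at most $m\cdot(s_{\max}/m)=s_{\max}$, i.e.\ no more than the speed of the single fastest core, which is the quantitative fact that drives the entire argument.

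Next I would take an optimal LP solution $(\bar{x},\bar{C},\bar{E})$ and transform it into a feasible solution supported only on the fast cores, by relocating, for every coflow $f$, all of its fractional mass on slow cores onto the fastest core. The central claim is that this relocation at most doubles the makespan. For the load constraints~(\ref{incoflow:f}) and~(\ref{incoflow:g}), the new data load at input port $i$ on the fastest core is the old load (bounded by $s_{\max}\bar{E}$ via~(\ref{incoflow:f})) plus the aggregated load moved off the slow cores; using $\sum_f \bar{x}_{kf}L_{if}\le s_k\bar{E}$ for each slow core $k$ and summing bounds the moved quantity by $\bigl(\sum_{k\text{ slow}} s_k\bigr)\bar{E}\le s_{\max}\bar{E}$, so the combined load divided by $s_{\max}$ is at most $2\bar{E}$; the output ports are symmetric. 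For the per-coflow constraints~(\ref{incoflow:b}) and~(\ref{incoflow:c}) I would observe that moving mass from a slower core to the fastest core only decreases each term $\bar{x}_{kf}L_{if}/s_k$, so the completion-time bounds, and hence~(\ref{incoflow:h}), remain satisfied. Thus the relocated solution is feasible for the instance restricted to the fast cores with makespan at most $2\bar{E}$.

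Finally, after normalizing the fast-core speeds into $[1,m)$, I would invoke Theorem~\ref{thm:thm1} on this restricted instance. Since the relocated solution witnesses that the LP optimum of the restricted instance is at most $2\bar{E}$, and since the guarantee of Theorem~\ref{thm:thm1} is a scale-invariant multiplicative ratio between the produced schedule's makespan and the LP optimum, it yields a schedule---which uses only the fast cores and is therefore a valid schedule for the original instance---of makespan at most $4m\gamma K\cdot 2\bar{E}=8m\gamma K\bar{E}$, equal to $O\bigl(m(\log m/\log\log m)^2\bigr)\bar{E}$ as claimed.

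The step I expect to be the main obstacle is verifying the doubling claim carefully: one must confirm that consolidating the slow-core mass of \emph{many different} coflows simultaneously onto one core does not inflate its load beyond $2\bar{E}$, which is exactly where the aggregate bound $\sum_{k\text{ slow}} s_k\le s_{\max}$ is indispensable, and that every LP constraint (not merely the load constraints) survives the relocation. The completion-time constraints are comparatively easy, since moving mass to a faster core can only decrease the relevant transfer times.
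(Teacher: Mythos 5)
Your proposal is correct and follows essentially the same route as the paper: the paper proves Theorem~\ref{thm:thm2} by the pre-processing argument sketched before Algorithm~\ref{Alg1} (discard cores slower than $1/m$ times the fastest, shift their $x$-values onto the fastest core at a factor-$2$ loss in the LP makespan, then invoke Theorem~\ref{thm:thm1}), which is exactly your fast/slow split and relocation argument. Your write-up is in fact more careful than the paper's, since you explicitly verify that constraints~(\ref{incoflow:b}), (\ref{incoflow:c}), (\ref{incoflow:f}), (\ref{incoflow:g}) and~(\ref{incoflow:h}) all survive the relocation.
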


\subsection{An extension for Total Weighted Completion Time}\label{sec:Algorithm2-1}
This section gives an $O\left(m\left(\log m/ \log \log m\right)^2\right)$ -approximation algorithm for minimizing total weighted completion time, which is based on combining our algorithm for minimizing makespan. 
Without loss of generality, we assume that $L_{if}/s_k\geq 1$ and $L_{jf}/s_k\geq 1$ for all $f\in \mathcal{F}$, $k\in \mathcal{M}$, $i\in \mathcal{I}$ and $j\in \mathcal{J}$. Let $s_{min}= \min_{k\in \mathcal{M}} s_k$. We have 
\begin{eqnarray*}
L=\log \left(\max_{f\in \mathcal{F}} r_f+\frac{1}{s_{min}}\max\left\{\max_{i\in \mathcal{I}}\sum_{f\in \mathcal{F}} L_{if}, \max_{j\in \mathcal{J}}\sum_{f\in \mathcal{F}} L_{jf}\right\}\right).
\end{eqnarray*}
First, we divide the time horizon into increasing time intervals: $[1,2], (2, 4], (4, 8], \ldots, (2^{L-1}, 2^{L}]$. Let $\tau_{l}=2^l$ where $l=0, 1, \ldots, L$.
We can formulate our problem as the following linear programming relaxation.

\begin{subequations}\label{incoflow:main:wc}
\begin{align}
&    \text{min} && \sum_{f\in \mathcal{F}}w_f C_{f}                           && \tag{\ref{incoflow:main:wc}}\\ 
&    \text{s.t.}. && \sum_{k\in \mathcal{M}} \sum_{l=1}^{L} x_{kfl}=1, && \forall f\in \mathcal{F}  \label{incoflow:wc:a}\\ 
&         && \sum_{k\in \mathcal{M}} \frac{L_{if}}{s_k}\sum_{l=1}^{L}x_{kfl}\leq C_{f}-r_f  && \forall f\in \mathcal{F}, \forall i\in \mathcal{I} \label{incoflow:wc:b}\\
&         && \sum_{k\in \mathcal{M}} \frac{L_{jf}}{s_k}\sum_{l=1}^{L}x_{kfl}\leq C_{f}-r_f  && \forall f\in \mathcal{F}, \forall j\in \mathcal{J} \label{incoflow:wc:c}\\
&         && \frac{1}{s_k}\sum_{u=1}^{l} \sum_{f\in \mathcal{F}} x_{kfu}L_{if}  \leq \tau_{l} && \forall i\in \mathcal{I}, \forall k\in \mathcal{M}, \forall l\in [1, L] \label{incoflow:wc:f}\\ 
&         && \frac{1}{s_k}\sum_{u=1}^{l} \sum_{f\in \mathcal{F}} x_{kfu}L_{jf}  \leq \tau_{l} && \forall j\in \mathcal{J}, \forall k\in \mathcal{M}, \forall l\in [1, L] \label{incoflow:wc:g}\\ 
&         && \sum_{l=1}^{L}\tau_{l-1} \sum_{k\in \mathcal{M}} x_{kfl} \leq C_{f}, && \forall f\in \mathcal{F} \label{incoflow:wc:h}\\ 
&         && x_{kfl}, C_{f}\geq 0 && \forall f\in \mathcal{F}, \forall k\in \mathcal{M} \label{incoflow:wc:i}
\end{align}   
\end{subequations}

In the LP (\ref{incoflow:main:wc}), $x_{kfl}$ indicates whether or not coflow $f$ completes on network core $k$ in the $l$-th interval (from $2^{l-1}$ to $2^{l}$) and $C_{f}$ is the completion time of coflow $f$ in the schedule.
The constraint~(\ref{incoflow:wc:a}) requires all the coflows must be assigned to run at some network core.
The constraint~(\ref{incoflow:wc:b}) (similarly the constraint~(\ref{incoflow:wc:c})) is that the time required to transmit coflow $f$ on input port $i$ (output port $j$) cannot exceed the time period between its release and completion time. 
The constraints~(\ref{incoflow:wc:f}) and (\ref{incoflow:wc:g}) represent capacity limits to time $\tau_{l}$.
Since $\tau_{l-1}$ is the lower bound on the completion time of coflows completed within the interval $l$, the constraint~(\ref{incoflow:wc:h}) is a lower bound to the completion time of coflow.
The constraint~(\ref{incoflow:wc:i}) requires the $x$ and $C$ variables to be non-negative.

Our algorithm coflow-driven-list-scheduling (described in Algorithm~\ref{Alg2}) is as follows. 
Given a set of coflow $\mathcal{F}$, an optimal solution $\bar{C}$ and $\bar{x}$ can be obtained by the linear program (\ref{incoflow:main:wc}). 
Lines 1-5 schedule all coflow into time intervals and normalize the value of $\bar{x}$. 
Lines 6-19 are the coflow-makespan-list-scheduling algorithm, which schedules the coflows in the corresponding time interval to each network core.
Lines 20-36 transmit all coflow, which is modified from Shafiee and Ghaderi's algorithm ~\cite{shafiee2018improved}.

\begin{algorithm}
\caption{coflow-driven-list-scheduling}
    \begin{algorithmic}[1]
		    \REQUIRE two vectors $\bar{C}\in \mathbb{R}_{\scriptscriptstyle \geq 0}^{n}$ and $\bar{x}\in \mathbb{R}_{\scriptscriptstyle \geq 0}^{n\times m}$
				\FOR{every coflow $f\in \mathcal{F}$}
				    \STATE $q(f)=\min_{l\in [1,L]} l$ s.t. $\sum_{l=1}^{q(f)}\sum_{k\in \mathcal{M}} x_{kfl}\geq 1/2$ and $\bar{C}_f\leq 2^{q(f)}$
						\STATE $\mathcal{F}_{q(f)}=\mathcal{F}_{q(f)} \cup \left\{f\right\}$
						\STATE $\alpha_f =\sum_{l=1}^{q(f)}\sum_{k\in \mathcal{M}} \bar{x}_{kfl}$ and $\tilde{x}_{kf}=\sum_{l=1}^{q(f)} \frac{\bar{x}_{kfl}}{\alpha_f}$ for all $k\in \mathcal{M}$
				\ENDFOR
				\FOR{$l\in [1,L]$}
				    \STATE let $load_{I}(i,h)$ be the load on the $i$-th input port of the network core $h$
				    \STATE let $load_{O}(j,h)$ be the load on the $j$-th output port of the network core $h$
				    \STATE let $\mathcal{A}_{lh}$ be the set of coflows allocated to network core $h$ in the $l$-th interval
				    \STATE both $load_{I}$ and $load_{O}$ are initialized to zero and $\mathcal{A}_{lh}=\emptyset$ for all $h\in [1, m]$
				    \FOR{every coflow $f\in \mathcal{F}_{l}$ in non-decreasing order of $\bar{C}_f$, breaking ties arbitrarily}
						    \STATE $\tilde{x}_{M_kf}=\sum_{u\in M_k} \tilde{x}_{uf}$ for all $k=1,\ldots, K$
				        \STATE $\ell_f=\max_{\ell\in [1,K]} \ell$ s.t. $\sum_{k=\ell}^{K} \tilde{x}_{M_kf}\geq 1/2$
						    \STATE $r(f)=\arg_{M_k:\ell_f\leq k\leq K} \max s(M_k)$
						    \STATE $h^*=\arg \min_{h\in M_{r(f)}}\frac{1}{s_h}\left(\max_{i,j\in [1,N]}load_{I}(i,h)+\right.$ $\left.load_{O}(j,h)+L_{if}+L_{jf}\right)$
						    \STATE $\mathcal{A}_{lh^*}=\mathcal{A}_{lh^*}\cup \left\{f\right\}$
						    \STATE $load_{I}(i,h^*)=load_{I}(i,h^*)+L_{if}$ and $load_{O}(j,h^*)=load_{O}(j,h^*)+L_{jf}$ for all $i,j\in [1,N]$
				    \ENDFOR
				\ENDFOR
		    \FOR{each $k\in \mathcal{M}$ do in parallel}
    				\FOR{$l\in [1,L]$}
				        \STATE wait until the first coflow is released
						    \WHILE{there is some incomplete flow}
						        \STATE for all $f\in \mathcal{A}_{lk}$, list the released and incomplete flows respecting the non-decreasing order in $\bar{C}_f$
								    \STATE let $L$ be the set of flows in the list
                    \FOR{every flow $(i, j, f)\in L$}
								    		\IF{the link $(i, j)$ is idle}
								    		    \STATE schedule flow $f$
								    		\ENDIF
								    \ENDFOR
								    \WHILE{no new flow is completed or released}
								        \STATE transmit the flows that get scheduled in line 28 at maximum rate $s_k$.
								    \ENDWHILE
						    \ENDWHILE
				    \ENDFOR						
				\ENDFOR	
   \end{algorithmic}
\label{Alg2}
\end{algorithm}

The following schedule method follows the method proposed by Chudak and Shmoys~\cite{CHUDAK1999323}.
%For all $f\in \mathcal{F}$, let $\bar{x}$ and $\bar{C}_f$ be an optimal solution to the linear program.
For any coflow $f\in \mathcal{F}$, let $q(f)$ be the the minimum value of $q$ such that both $\sum_{l=1}^{q(f)}\sum_{k\in \mathcal{M}} x_{kfl}\geq 1/2$ and $\bar{C}_f\leq 2^{q(f)}$ are satisfied. 
We set $\mathcal{F}_{l}=\left\{f|q(f)=l\right\}$ and construct a schedule for each subset $\mathcal{F}_{l}$ respectively.
Let $\alpha_f$ be the total fraction of coflow $f$ over all network cores in the first $q(f)$ intervals with respect to solution $\bar{x}$:
\begin{eqnarray}\label{eq:3}
\alpha_f =\sum_{l=1}^{q(f)}\sum_{k\in \mathcal{M}} \bar{x}_{kfl}.
\end{eqnarray}
We set a feasible solution $\tilde{x}$ from the optimal solution $\bar{x}$:
\begin{eqnarray}\label{eq:4}
\tilde{x}_{kf}=\sum_{l=1}^{q(f)} \frac{\bar{x}_{kfl}}{\alpha_f}
\end{eqnarray}
for all $f\in \mathcal{F}$ and $k\in \mathcal{M}$.

Fix some $l=1,\ldots, L$, and consider the coflows in $\mathcal{F}_{l}$. 
We can construct a scheduling fragment for $\mathcal{F}_{l}$ of length $\bar{R}2^{l+1}$, where $\bar{R}$ is the performance guarantee of the proposed approximation algorithm for the makespan objective. This fragment shall be run from time $\bar{R}(1+2+\cdots+2^l)$ to $\bar{R}(1+2+\cdots+2^{l+1})$. Therefore, each coflow $f\in \mathcal{F}_{l}$ completes at most $4\bar{R}2^{l}$ before. Since $l$ is the minimum value for $\bar{C}_f\leq 2^{l}$, we have $2^{l}\leq 2\bar{C}_f$. Since $\sum_{u=1}^{l}\sum_{k\in \mathcal{M}} x_{kfu}\geq 1/2$, we have
\begin{eqnarray*}\label{eq:5}
\tau_{l-1}/2 & \leq &\tau_{l-1}\left(\sum_{u=l}^{L}\sum_{k\in \mathcal{M}} x_{kfu}\right) \\
& \leq & \sum_{u=l}^{L}\sum_{k\in \mathcal{M}} \tau_{u-1} x_{kfu} \\
& \leq & \sum_{u=1}^{L}\sum_{k\in \mathcal{M}} \tau_{u-1} x_{kfu} \\
& \leq & \bar{C}_f.
\end{eqnarray*}
We have $2^{l}\leq 4\bar{C}_f$ and proved the following theorem.
\begin{thm}
The coflow-driven-list-scheduling has an approximation ratio of, at most, $64m\gamma K=O\left(m\left(\log m/ \log \log m\right)^2\right)$.
\end{thm}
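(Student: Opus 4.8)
The plan is to bound the completion time of each coflow in the schedule produced by Algorithm~\ref{Alg2} against its LP completion time $\bar{C}_f$, and then sum over coflows weighted by $w_f$. Since LP~(\ref{incoflow:main:wc}) is a relaxation of the scheduling problem, $\sum_{f\in\mathcal{F}} w_f \bar{C}_f$ is a lower bound on the optimum, so it suffices to prove a per-coflow bound of the form $C_f \le 64m\gamma K\,\bar{C}_f$ and then sum.

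First I would analyze a single group. Fix $l$ and consider $\mathcal{F}_l=\{f:q(f)=l\}$. By the defining conjunction for $q(f)$ we have $\alpha_f=\sum_{u=1}^{l}\sum_{k\in\mathcal{M}}\bar{x}_{kfu}\ge 1/2$, so the renormalized assignment $\tilde{x}_{kf}=\sum_{u=1}^{l}\bar{x}_{kfu}/\alpha_f$ satisfies $\sum_{k}\tilde{x}_{kf}=1$ and $\tilde{x}_{kf}\le 2\sum_{u=1}^{l}\bar{x}_{kfu}$. Substituting this into the input-port capacity constraint~(\ref{incoflow:wc:f}) (and symmetrically~(\ref{incoflow:wc:g})) and using $\tau_l=2^l$ gives, for every core $k$ and port $i$,
\begin{equation*}
\frac{1}{s_k}\sum_{f\in\mathcal{F}_l}\tilde{x}_{kf}L_{if}\le \frac{2}{s_k}\sum_{f\in\mathcal{F}}\sum_{u=1}^{l}\bar{x}_{kfu}L_{if}\le 2\tau_l=2^{l+1}.
\end{equation*}
Thus $\tilde{x}$ is feasible for the makespan LP~(\ref{incoflow:main}) restricted to $\mathcal{F}_l$ with value $E=2^{l+1}$, and applying the makespan performance guarantee $\bar{R}=4m\gamma K$ of Theorem~\ref{thm:thm1} to the list-scheduling run (Lines~6--19) shows the fragment for $\mathcal{F}_l$ has makespan at most $\bar{R}\,2^{l+1}$.

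Next I would concatenate the fragments in time, running the fragment for $\mathcal{F}_l$ over the window from $\bar{R}\sum_{u=0}^{l}2^{u}$ to $\bar{R}\sum_{u=0}^{l+1}2^{u}$, whose length is exactly $\bar{R}\,2^{l+1}$, matching the makespan bound just derived. This also leaves every $f\in\mathcal{F}_l$ released before its window opens, since $r_f\le\bar{C}_f\le 2^l\le\bar{R}(2^{l+1}-1)$. Consequently each $f\in\mathcal{F}_l$ completes no later than the end of its window, $\bar{R}(2^{l+2}-1)\le 4\bar{R}\,2^l$.

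The crux, and the step I expect to be the main obstacle, is relating $2^l$ back to $\bar{C}_f$, because $q(f)=l$ is defined by the conjunction of two conditions and one must argue which is binding at index $l-1$. I would split into cases: if the completion condition fails at $l-1$ (i.e.\ $\bar{C}_f>2^{l-1}$) then immediately $2^l\le 2\bar{C}_f$; otherwise the fractional condition fails, so $\sum_{u=1}^{l-1}\sum_k\bar{x}_{kfu}<1/2$ and hence $\sum_{u=l}^{L}\sum_k\bar{x}_{kfu}\ge 1/2$ by constraint~(\ref{incoflow:wc:a}), whence the chain
\begin{equation*}
\frac{\tau_{l-1}}{2}\le \tau_{l-1}\sum_{u=l}^{L}\sum_k\bar{x}_{kfu}\le \sum_{u=l}^{L}\sum_k\tau_{u-1}\bar{x}_{kfu}\le \sum_{u=1}^{L}\sum_k\tau_{u-1}\bar{x}_{kfu}\le \bar{C}_f
\end{equation*}
(using $\tau_{u-1}\ge\tau_{l-1}$ for $u\ge l$ and constraint~(\ref{incoflow:wc:h})) yields $2^l\le 4\bar{C}_f$. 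Taking the worse of the two cases gives $2^l\le 4\bar{C}_f$ in general, so $C_f\le 4\bar{R}\,2^l\le 16\bar{R}\,\bar{C}_f=64m\gamma K\,\bar{C}_f$; summing, $\sum_{f\in\mathcal{F}}w_f C_f\le 64m\gamma K\sum_{f\in\mathcal{F}}w_f\bar{C}_f\le 64m\gamma K\cdot\mathrm{OPT}$, which gives the claimed ratio.
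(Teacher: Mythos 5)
Your proposal is correct and follows essentially the same route as the paper's own argument: the same partition of coflows by $q(f)$, the same renormalization $\tilde{x}_{kf}=\sum_{l\le q(f)}\bar{x}_{kfl}/\alpha_f$, the same concatenation of makespan fragments of length $\bar{R}2^{l+1}$, and the same bound $2^{l}\le 4\bar{C}_f$ yielding the constant $64m\gamma K$. If anything, your write-up is more careful than the paper's sketch, since you explicitly verify that $\tilde{x}$ is feasible for LP~(\ref{incoflow:main}) with $E=2^{l+1}$ before invoking the makespan guarantee, and you spell out the two-case minimality argument for $q(f)$ that the paper states only loosely.
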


\section{Approximation Algorithm for Divisible Coflow Scheduling}\label{sec:Algorithm1}
This section considers the divisible coflow scheduling problem. 
The method is similar to schedule indivisible coflow, the difference is that it is scheduled at the flow level. First, we consider the minimizing makespan problem. We can formulate our problem as the following linear programming relaxation.

\begin{subequations}\label{coflow:main}
\begin{align}
&    \text{min} && E                           && \tag{\ref{coflow:main}}\\ 
&    \text{s.t.} && \sum_{k\in \mathcal{M}} x_{kijf}=1, && \forall f\in \mathcal{F}, \forall i\in \mathcal{I}, \forall j\in \mathcal{J} \label{coflow:a}\\ 
&         && \sum_{k\in \mathcal{M}} \frac{d_{ijf} x_{kijf}}{s_k}\leq C_{ijf}  && \forall f\in \mathcal{F}, \forall i\in \mathcal{I}, \forall j\in \mathcal{J} \label{coflow:b}\\
%&         && \sum_{i\in \mathcal{I}} d_{ijf}\sum_{k\in \mathcal{M}} \frac{x_{kijf}}{s_k}\leq C_{ijf}  && \forall f\in \mathcal{F}, j\in \mathcal{J} \label{coflow:c}\\
&         && \frac{1}{s_k}\sum_{f\in \mathcal{F}} \sum_{j\in \mathcal{J}} d_{ijf} x_{kijf} \leq E && \forall i\in \mathcal{I}, \forall k\in \mathcal{M} \label{coflow:f}\\ 
&         && \frac{1}{s_k}\sum_{f\in \mathcal{F}} \sum_{i\in \mathcal{I}} d_{ijf} x_{kijf} \leq E && \forall j\in \mathcal{J}, \forall k\in \mathcal{M} \label{coflow:g}\\ 
&         && C_{ijf}\leq E && \forall f\in \mathcal{F}, \forall i\in \mathcal{I}, \forall j\in \mathcal{J} \label{coflow:h}\\ 
&         && x_{kijf}, C_{ijf}\geq 0 && \forall f\in \mathcal{F}, \forall i\in \mathcal{I}, \forall j\in \mathcal{J}, \forall k\in \mathcal{M} \label{coflow:i}
\end{align}   
\end{subequations}

In the LP (\ref{coflow:main}), $x_{kijf}$ indicates whether flow $(i, j, f)$ is scheduled on network core $k$, $E$ is the makespan of the schedule and $C_{ijf}$ is the completion time of $(i, j, f)$ in the schedule.
The constraint~(\ref{coflow:a}) requires scheduling for each flow $(i, j, f)$.
The constraint~(\ref{coflow:b}) is that the completion time of $(i, j, f)$ occurring on link $(i, j)$ is at least the transfer time on the network cores allocated to it. 
The constraint~(\ref{coflow:f}) (similarly the constraint~(\ref{coflow:g})) says that, for each network core $k$ and input port $i$ (output port $j$), the makespan $E$ is at least the total transfer time that occurs on input port $i$ (output port $j$) in all flows assigned to $k$.
The constraint~(\ref{coflow:h}) states that the makespan $E$ is at least the completion time of any flow $(i, j f)$.
The constraint~(\ref{coflow:i}) requires the $x$ and $C$ variables to be non-negative.

Following the steps in Section~\ref{sec:Algorithm2}, we divide the network cores into groups. 
We have $\gamma= \log m/\log \log m$ and $K$ groups $M_1, M_2, \ldots, M_K$, where $M_k$ contains network cores with speed in $[\gamma^{k-1},\gamma^{k})$.
For a subset $M_k\subseteq M$ of network cores, we also have
\begin{eqnarray*}
s(M_k)=\sum_{u\in M_k} s_u.
\end{eqnarray*}
For $M_k\subseteq M$, let
\begin{eqnarray*}
x_{M_kijf}=\sum_{u\in M_k} x_{uijf}
\end{eqnarray*}
be the total fraction of flow $(i, j, f)$ assigned to network cores in $M_k$. For any flow $(i, j, f)$, let $\ell_{ijf}$ be the largest integer $\ell$ such that $\sum_{k=\ell}^{K} x_{M_kijf}\geq 1/2$.
Then, let $r(i, j, f)$ be the index $k\in [\ell_{ijf}, K]$ that maximizes $s(M_k)$. Using the $r(i, j, f)$ value, we can run the list algorithm (\textbf{Algorithm}~\ref{Alg3}) to get the index of the allocated network core. 
The algorithm is to find the least loaded network core and assign flow to it. The proposed algorithm has the following lemmas.

\begin{algorithm}
\caption{flow-makespan-list-scheduling}
    \begin{algorithmic}[1]
		    \REQUIRE two vectors $\bar{C}\in \mathbb{R}_{\scriptscriptstyle \geq 0}^{n}$ and $\bar{x}\in \mathbb{R}_{\scriptscriptstyle \geq 0}^{n\times m}$
				\STATE let $load_{I}(i,h)$ be the load on the $i$-th input port of the network core $h$
				\STATE let $load_{O}(j,h)$ be the load on the $j$-th output port of the network core $h$
				\STATE let $\mathcal{A}_h$ be the set of coflows allocated to network core $h$
				\STATE both $load_{I}$ and $load_{O}$ are initialized to zero and $\mathcal{A}_h=\emptyset$ for all $h\in [1, m]$
				\FOR{every flow $(i, j, f)$ in non-decreasing order of $\bar{C}_{ijf}$, breaking ties arbitrarily}
				    \STATE $x_{M_kijf}=\sum_{u\in M_k} x_{uijf}$ for all $k=1,\ldots, K$
				    \STATE $\ell_{ijf}=\max_{\ell\in [1,K]} \ell$ s.t. $\sum_{k=\ell}^{K} x_{M_kijf}\geq 1/2$
						\STATE $r(i, j, f)=\arg_{M_k:\ell_{ijf}\leq k\leq K} \max s(M_k)$
						\STATE $h^*=\arg \min_{h\in M_{r(i, j, f)}}\frac{1}{s_h}\left(load_{I}(i,h)+load_{O}(j,h)\right)$
						\STATE $\mathcal{A}_{h^*}=\mathcal{A}_{h^*}\cup \left\{(i, j, f)\right\}$
						\STATE $load_{I}(i,h^*)=load_{I}(i,h^*)+d_{ijf}$ and $load_{O}(j,h^*)=load_{O}(j,h^*)+d_{ijf}$
				\ENDFOR
   \end{algorithmic}
\label{Alg3}
\end{algorithm}

\begin{lem}\label{lem:lem21}
For every flow $(i, j, f)$, and any network core $k\in M_{\ell_f}$, we have $\frac{d_{ijf}}{s_k}\leq 2 \gamma \sum_{k'\in M}\frac{d_{ijf}x_{k'ijf}}{s_k'}$.
\end{lem}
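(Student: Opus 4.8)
The plan is to cancel the common factor $d_{ijf}$ from both sides (the inequality being trivial when $d_{ijf}=0$) and reduce the claim to the purely speed-based inequality $\frac{1}{s_k}\leq 2\gamma\sum_{k'\in M}\frac{x_{k'ijf}}{s_{k'}}$ for any $k\in M_{\ell_{ijf}}$ (I read the $\ell_f$ of the statement as $\ell_{ijf}$, the index attached to the flow $(i,j,f)$). The whole argument then rests on extracting, from the definition of $\ell_{ijf}$, a lower bound on how much of the flow sits on the \emph{slow} cores, together with the group-speed bounds $[\gamma^{k-1},\gamma^{k})$.

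First I would use the definition of $\ell_{ijf}$ to relocate the mass. Since $\ell_{ijf}$ is the \emph{largest} index with $\sum_{k=\ell_{ijf}}^{K}x_{M_kijf}\geq\frac12$, the index $\ell_{ijf}+1$ fails the test, so $\sum_{k=\ell_{ijf}+1}^{K}x_{M_kijf}<\frac12$. Combined with the total-assignment identity $\sum_{k=1}^{K}x_{M_kijf}=1$ coming from constraint~(\ref{coflow:a}), this yields $\sum_{k=1}^{\ell_{ijf}}x_{M_kijf}>\frac12$; that is, strictly more than half of the flow is placed on cores in the slow groups $M_1,\dots,M_{\ell_{ijf}}$.

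Next I would lower-bound the sum $\sum_{k'\in M}\frac{x_{k'ijf}}{s_{k'}}$ by restricting it to those slow groups. Every core $k'$ lying in a group $M_{k''}$ with $k''\leq\ell_{ijf}$ has speed $s_{k'}<\gamma^{k''}\leq\gamma^{\ell_{ijf}}$, hence $\frac{1}{s_{k'}}>\gamma^{-\ell_{ijf}}$. Pulling this uniform bound out of the restricted sum and using the mass estimate from the previous step gives $\sum_{k'\in M}\frac{x_{k'ijf}}{s_{k'}}>\gamma^{-\ell_{ijf}}\sum_{k=1}^{\ell_{ijf}}x_{M_kijf}>\frac{1}{2\gamma^{\ell_{ijf}}}$. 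On the other side, any $k\in M_{\ell_{ijf}}$ has $s_k\geq\gamma^{\ell_{ijf}-1}$, so $\frac{1}{s_k}\leq\gamma^{-(\ell_{ijf}-1)}=\gamma\cdot\gamma^{-\ell_{ijf}}$. Chaining the two bounds, $\frac{1}{s_k}\leq\gamma\cdot\gamma^{-\ell_{ijf}}=2\gamma\cdot\frac{1}{2\gamma^{\ell_{ijf}}}<2\gamma\sum_{k'\in M}\frac{x_{k'ijf}}{s_{k'}}$, and multiplying through by $d_{ijf}$ recovers the claim.

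The step I expect to need the most care is the first one: reading the extremal definition of $\ell_{ijf}$ correctly so that the \emph{majority} of the flow is certified on the slow side rather than the fast side. It is tempting to use the defining inequality $\sum_{k\geq\ell_{ijf}}x_{M_kijf}\geq\frac12$ directly, but that bounds the fast cores, where $1/s$ is small and useless for a lower bound; the proof instead needs the complementary mass $\sum_{k\leq\ell_{ijf}}x_{M_kijf}>\frac12$ on the slow cores, which is exactly where the factor $2$ in $2\gamma$ comes from, while the remaining factor $\gamma$ accounts for the width $[\gamma^{\ell_{ijf}-1},\gamma^{\ell_{ijf}})$ of a single speed group.
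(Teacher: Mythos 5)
Your proposal is correct and follows essentially the same route as the paper's own proof: pass to the complementary mass $\sum_{k=1}^{\ell_{ijf}}x_{M_kijf}>\tfrac12$ on the slow groups, lower-bound $\sum_{k'\in M}x_{k'ijf}/s_{k'}$ by $\gamma^{-\ell_{ijf}}/2$ there, and compare with $1/s_k\leq\gamma^{1-\ell_{ijf}}$ for $k$ in group $\ell_{ijf}$ (or any faster group). Your write-up is in fact slightly cleaner, since it handles $d_{ijf}=0$ and states the bound $\frac{1}{s_k}\leq\gamma^{1-\ell_{ijf}}$ with the correct inequality direction, which the paper's proof misprints as $\geq$.
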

\begin{proof}
Since $\sum_{k=\ell_{ijf}+1}^{K} x_{M_kijf}< 1/2$, we have $\sum_{k=1}^{\ell_{ijf}} x_{M_kijf}> 1/2$. Moreover, since $\sum_{k'\in \cup_{k=1}^{\ell_{ijf}} M_k} x_{k'ijf}\geq 1/2$ and $\frac{1}{s_k'}\geq \gamma^{-\ell_{ijf}}$ for every $k'\in \cup_{k=1}^{\ell_{ijf}} M_k$, we also have $\sum_{k'\in M}\frac{x_{k'ijf}}{s_k'}\geq \sum_{k'\in \cup_{k=1}^{\ell_{ijf}} M_k}\frac{x_{k'ijf}}{s_k'}\geq \gamma^{-\ell_{ijf}}/2$.

Since $k$ is in group $r(i, j, f)\geq \ell_{ijf}$, $k$ has speed at least $\gamma^{\ell_{ijf}-1}$ and thus $\frac{1}{s_k}\geq \gamma^{1-\ell_{ijf}}$.
Therefore, we have 
\begin{eqnarray*}
\frac{d_{ijf}}{s_k} & \leq & d_{ijf} \gamma^{1-\ell_{ijf}} \\
                    & \leq & 2 \gamma \sum_{k'\in M}\frac{d_{ijf}x_{k'ijf}}{s_k'}.
\end{eqnarray*}
\end{proof}

\begin{lem}\label{lem:lem22}
For any port $i\in \mathcal{I}$, we have $\sum_{f\in \mathcal{F}}\frac{L_{if}}{s(M_{r(i, j, f)})}\leq 2KE$.
\end{lem}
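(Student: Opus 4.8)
The plan is to mirror the proof of Lemma~\ref{lem:lem2}, but carried out at the level of individual flows $(i,j,f)$ rather than whole coflows, since in the divisible setting each flow is routed independently to the group $M_{r(i,j,f)}$. I read the statement at the flow level, so that the left-hand side is understood as $\sum_{f\in\mathcal{F}}\sum_{j\in\mathcal{J}} d_{ijf}/s(M_{r(i,j,f)})$; this agrees with the displayed form once $L_{if}=\sum_{j} d_{ijf}$ is expanded and the $r$-index is recognized as a per-flow quantity. With this reading, the argument proceeds in three moves: a pointwise lower bound on $\sum_{k} x_{M_kijf}/s(M_k)$, an interchange in the order of summation, and an appeal to the LP capacity constraint~(\ref{coflow:f}).

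First I would establish the per-flow inequality $\frac{1}{s(M_{r(i,j,f)})}\le 2\sum_{k=1}^{K} \frac{x_{M_kijf}}{s(M_k)}$. This follows exactly as in Lemma~\ref{lem:lem2}: by definition $\sum_{k=\ell_{ijf}}^{K} x_{M_kijf}\ge 1/2$, and because $r(i,j,f)$ is chosen to maximize $s(M_k)$ over $k\in[\ell_{ijf},K]$, every term in that tail sum satisfies $1/s(M_k)\ge 1/s(M_{r(i,j,f)})$. Hence $\sum_{k=\ell_{ijf}}^{K} x_{M_kijf}/s(M_k)\ge 1/(2\, s(M_{r(i,j,f)}))$, and extending the index range from $[\ell_{ijf},K]$ to all of $[1,K]$ only increases the left side.

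Next I would multiply through by $d_{ijf}$, sum over $f$ and $j$, and swap the order of summation to isolate each group $M_k$, giving $\sum_{f,j} d_{ijf}/s(M_{r(i,j,f)}) \le 2\sum_{k=1}^{K} \frac{1}{s(M_k)}\sum_{f,j} d_{ijf}\,x_{M_kijf}$. Expanding $x_{M_kijf}=\sum_{u\in M_k}x_{uijf}$ and invoking constraint~(\ref{coflow:f}), which bounds $\sum_{f,j} d_{ijf}\,x_{uijf}\le s_u E$ for each core $u$, I obtain $\sum_{f,j} d_{ijf}\,x_{M_kijf}\le \sum_{u\in M_k}s_u E = s(M_k)E$. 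The factors of $s(M_k)$ then cancel, leaving $2\sum_{k=1}^{K} E = 2KE$, as required.

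The only real subtlety is the bookkeeping flagged at the outset: since $r$ is a per-flow quantity, the $j$-index must be carried faithfully through the summation swap, and one must apply the flow-level capacity constraint~(\ref{coflow:f}), summed over $u\in M_k$, in place of the coflow-level constraint used in Lemma~\ref{lem:lem2}. Everything else is routine once the pointwise inequality is in hand, so I anticipate no genuine obstacle beyond this accounting.
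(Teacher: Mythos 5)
Your proposal is correct and matches the paper's own proof essentially step for step: the same flow-level reading of the left-hand side as $\sum_{f\in\mathcal{F}}\sum_{j\in\mathcal{J}} d_{ijf}/s(M_{r(i,j,f)})$, the same pointwise bound $1/s(M_{r(i,j,f)})\leq 2\sum_{k=1}^{K} x_{M_kijf}/s(M_k)$, the same interchange of summation, and the same aggregation of constraint~(\ref{coflow:f}) over $u\in M_k$ to get $\sum_{f,j} d_{ijf}\,x_{M_kijf}\leq s(M_k)E$. If anything, your write-up is slightly more careful than the paper's, which leaves the maximality of $s(M_{r(i,j,f)})$ implicit and has a dangling index $k$ in its displayed bound $\frac{1}{2s(M_k)}$ where $\frac{1}{2s(M_{r(i,j,f)})}$ is meant.
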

\begin{proof}
Since $\sum_{k=\ell_f}^{K}x_{M_kijf}\geq \frac{1}{2}$ for any flow $(i, j, f)$, we have 
\begin{eqnarray*}
\sum_{k=1}^{K}\frac{x_{M_kijf}}{s(M_k)}\geq \sum_{k=\ell_{ijf}}^{K}\frac{x_{M_kijf}}{s(M_k)}\geq \frac{1}{2s(M_k)}.
\end{eqnarray*}
According to the above inequality, we have
\begin{eqnarray*}
\sum_{f\in \mathcal{F}}\frac{L_{if}}{s(M_{r(i, j, f)})} & = & \sum_{f\in \mathcal{F}}\sum_{j\in \mathcal{J}} \frac{d_{ijf}}{s(M_{r(i, j, f)})} \\
   & \leq & 2\sum_{f\in \mathcal{F}}\sum_{j\in \mathcal{J}}d_{ijf}\sum_{k=1}^{K}\frac{x_{M_kijf}}{s(M_{k})} \\
	 & =    & 2\sum_{k=1}^{K}\frac{1}{s(M_{k})}\sum_{f\in \mathcal{F}}\sum_{j\in \mathcal{J}} d_{ijf}x_{M_kijf} \\
	 & \leq & 2\sum_{k=1}^{K} E \\
	 & =    & 2KE.
\end{eqnarray*}
The last inequality is due to constraint~(\ref{coflow:f}). Since 
$\sum_{f\in \mathcal{F}} \sum_{j\in \mathcal{J}} d_{ijf} x_{kijf} \leq s_k E$ for every $i\in M_k$, we have $\sum_{f\in \mathcal{F}} \sum_{j\in \mathcal{J}} d_{ijf} x_{M_kijf} \leq s(M_{k}) E$.
\end{proof}

\begin{lem}\label{lem:lem23}
For any port $j\in \mathcal{J}$, we have $\sum_{f\in \mathcal{F}}\frac{L_{jf}}{s(M_{r(i, j, f)})}\leq 2KE$.
\end{lem}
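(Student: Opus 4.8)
The plan is to mirror the argument for Lemma~\ref{lem:lem22} almost verbatim, interchanging the roles of the input ports and the output ports. Only two things change: $L_{jf}$ now expands as a sum over the \emph{input} index, $L_{jf}=\sum_{i\in\mathcal{I}} d_{ijf}$ (as opposed to the sum over $j$ used for $L_{if}$), and the capacity bound invoked at the very end comes from the output-port constraint~(\ref{coflow:g}) rather than from the input-port constraint~(\ref{coflow:f}). As in the statement of Lemma~\ref{lem:lem22}, the expression $\sum_{f\in\mathcal{F}}\frac{L_{jf}}{s(M_{r(i,j,f)})}$ is to be read as $\sum_{f\in\mathcal{F}}\sum_{i\in\mathcal{I}}\frac{d_{ijf}}{s(M_{r(i,j,f)})}$, since the index $i$ in $r(i,j,f)$ is bound by the hidden sum defining $L_{jf}$.

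First I would establish the per-flow inequality $\frac{1}{s(M_{r(i,j,f)})}\le 2\sum_{k=1}^{K}\frac{x_{M_kijf}}{s(M_k)}$. By the definition of $\ell_{ijf}$ we have $\sum_{k=\ell_{ijf}}^{K} x_{M_kijf}\ge \tfrac12$, and since $r(i,j,f)$ is chosen to maximize $s(M_k)$ over $k\in[\ell_{ijf},K]$, every index in that tail satisfies $s(M_k)\le s(M_{r(i,j,f)})$, hence $\frac{1}{s(M_k)}\ge\frac{1}{s(M_{r(i,j,f)})}$. Replacing each $s(M_k)$ by $s(M_{r(i,j,f)})$ in the tail sum and using the half-mass bound then yields $\sum_{k=\ell_{ijf}}^{K}\frac{x_{M_kijf}}{s(M_k)}\ge\frac{1}{2s(M_{r(i,j,f)})}$, which rearranges to the claimed factor-$2$ inequality.

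Next I would multiply this per-flow inequality by $d_{ijf}$ and sum over all $f\in\mathcal{F}$ and $i\in\mathcal{I}$; recognizing $\sum_{i\in\mathcal{I}} d_{ijf}=L_{jf}$ identifies the left-hand side as $\sum_{f\in\mathcal{F}}\frac{L_{jf}}{s(M_{r(i,j,f)})}$, while the right-hand side becomes $2\sum_{k=1}^{K}\frac{1}{s(M_k)}\sum_{f\in\mathcal{F}}\sum_{i\in\mathcal{I}} d_{ijf}\,x_{M_kijf}$ after interchanging the order of summation. To finish, I would sum the output-port constraint~(\ref{coflow:g}) over the cores $u\in M_k$ of a single group; since $x_{M_kijf}=\sum_{u\in M_k} x_{uijf}$, this gives $\sum_{f\in\mathcal{F}}\sum_{i\in\mathcal{I}} d_{ijf}\,x_{M_kijf}\le s(M_k)\,E$, so each of the $K$ inner sums is bounded by $s(M_k)E$, the factors $1/s(M_k)$ cancel, and the total is at most $2KE$.

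There is essentially no real obstacle here beyond bookkeeping: the input/output symmetry of the LP~(\ref{coflow:main}), where constraints~(\ref{coflow:f}) and~(\ref{coflow:g}) are exact mirror images, makes the output-port case formally identical to the already-proved input-port case. The one point that will require a moment of care is the final step, ensuring the group-level capacity bound $\sum_{f}\sum_{i} d_{ijf}\,x_{M_kijf}\le s(M_k)E$ is obtained by summing the \emph{output}-port constraint~(\ref{coflow:g}) over $u\in M_k$, and is not inadvertently taken from~(\ref{coflow:f}).
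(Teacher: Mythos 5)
Your proposal is correct and follows essentially the same route as the paper: the paper's own proof of this lemma is literally ``similar to that of Lemma~\ref{lem:lem22},'' and your writeup is exactly that mirrored argument---expanding $L_{jf}=\sum_{i\in\mathcal{I}}d_{ijf}$, proving the factor-$2$ per-flow bound from the definitions of $\ell_{ijf}$ and $r(i,j,f)$, and closing with the group-summed output-port capacity constraint~(\ref{coflow:g}) in place of~(\ref{coflow:f}). Your explicit note on reading $r(i,j,f)$ under the hidden sum over $i$ is a welcome clarification of a notational looseness the paper itself leaves implicit.
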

\begin{proof}
The proof is similar to that of lemma~\ref{lem:lem22}.
\end{proof}

\begin{lem}\label{lem:lem24}
Let $\bar{E}$ be an optimal solution to the linear program (\ref{coflow:main}), and let $\tilde{E}$ denote the makespan in the schedule found by flow-makespan-list-scheduling. We have
\begin{eqnarray*}
\tilde{E}\leq (4K+2\gamma)\bar{E}.
\end{eqnarray*}
\end{lem}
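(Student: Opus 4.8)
The plan is to follow the template of the indivisible bound in Lemma~\ref{lem:lem4}, but to exploit the fact that Algorithm~\ref{Alg3} load-balances \emph{per flow} rather than per coflow. This finer balancing, together with Lemma~\ref{lem:lem21}, is what replaces the multiplicative $m\gamma$ factor by a mere additive $2\gamma$. First I would isolate the critical flow: let $(i,j,f)$ be the flow that completes last, let $h^*\in M_{r(i,j,f)}$ be the core it is assigned to, write $r=r(i,j,f)$, and let $\ell_I=load_I(i,h^*)$ and $\ell_O=load_O(j,h^*)$ denote the loads on input port $i$ and output port $j$ of $h^*$ just before $(i,j,f)$ is placed.

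The second step is to bound the makespan by these two port loads. Using the standard single-switch (Birkhoff--von Neumann) argument, the last flow on a link can only be delayed by the other work sharing its input port and the other work sharing its output port, so its completion time on $h^*$ obeys $\tilde{E}\leq \frac{(\ell_I+d_{ijf})+(\ell_O+d_{ijf})-d_{ijf}}{s_{h^*}}=\frac{\ell_I+\ell_O}{s_{h^*}}+\frac{d_{ijf}}{s_{h^*}}$, where $\ell_I+d_{ijf}$ and $\ell_O+d_{ijf}$ are the \emph{final} loads on the two ports and the subtracted $d_{ijf}$ removes the double counting of the flow itself. I would then control the greedy term $\frac{\ell_I+\ell_O}{s_{h^*}}$: since line~9 of Algorithm~\ref{Alg3} selects $h^*$ to minimize $\frac{1}{s_h}\bigl(load_I(i,h)+load_O(j,h)\bigr)$ over $h\in M_r$, the mediant inequality yields $\frac{\ell_I+\ell_O}{s_{h^*}}\leq \frac{\sum_{h\in M_r}\bigl(load_I(i,h)+load_O(j,h)\bigr)}{s(M_r)}$. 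Splitting into an input part and an output part, the input part $\sum_{h\in M_r}load_I(i,h)$ is exactly the data already placed on input port $i$ by flows $(i,j',f')$ with $r(i,j',f')=r$; dividing by $s(M_r)$ and enlarging the sum to all flows on port $i$ gives at most $\sum_{f'\in\mathcal{F},\,j'\in\mathcal{J}}\frac{d_{ij'f'}}{s(M_{r(i,j',f')})}\leq 2K\bar{E}$ by Lemma~\ref{lem:lem22}, and the output part is symmetric via Lemma~\ref{lem:lem23}. Hence $\frac{\ell_I+\ell_O}{s_{h^*}}\leq 4K\bar{E}$.

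The third step handles the single-flow term $\frac{d_{ijf}}{s_{h^*}}$. Because $h^*\in M_r$ with $r\geq \ell_{ijf}$, Lemma~\ref{lem:lem21} applies and gives $\frac{d_{ijf}}{s_{h^*}}\leq 2\gamma\sum_{k'\in M}\frac{d_{ijf}x_{k'ijf}}{s_{k'}}$; combining constraints~(\ref{coflow:b}) and~(\ref{coflow:h}) bounds the right-hand sum by $C_{ijf}\leq \bar{E}$, so $\frac{d_{ijf}}{s_{h^*}}\leq 2\gamma\bar{E}$. Adding the two contributions then yields $\tilde{E}\leq (4K+2\gamma)\bar{E}$, as claimed.

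The step I expect to be the main obstacle is getting the constant exactly right in the second and third steps. One must use the sharper last-flow makespan bound $L_i+L_j-d_{ijf}$ rather than the naive $L_i+L_j$, since the crude bound would cost an extra $2\gamma$. One also has to verify carefully that the loads summed over the group $M_r$, once divided by $s(M_r)$, really coincide with the quantities bounded by Lemmas~\ref{lem:lem22} and~\ref{lem:lem23}: the key observation is that a flow touching input port $i$ contributes to $\sum_{h\in M_r}load_I(i,h)$ only when its own designated group index equals $r$, so that its denominator $s(M_{r(i,j',f')})$ agrees with $s(M_r)$, and only then is enlarging the sum to all flows legitimate for invoking the LP-based port lemmas.
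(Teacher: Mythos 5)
Your proof is correct and follows essentially the same route as the paper's: the same decomposition of the makespan into a congestion term bounded by $4K\bar{E}$ via Lemmas~\ref{lem:lem22} and~\ref{lem:lem23}, plus the last flow's own transmission time bounded by $2\gamma\bar{E}$ via Lemma~\ref{lem:lem21} combined with constraints~(\ref{coflow:b}) and~(\ref{coflow:h}). The one substantive difference is that you justify the first step explicitly, whereas the paper asserts its inequality~(\ref{eq:21-1}) with only a one-line ``busy links'' remark: you derive it from the greedy choice in line~9 of Algorithm~\ref{Alg3} via the mediant inequality, together with the observation that every flow landing on a core of $M_r$ has designated group exactly $r$. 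This is exactly the missing content, since a naive per-flow comparison would go the wrong way ($s(M_{r(i,j,f)})\geq s_{h^*}$, so one cannot simply replace $s_{h^*}$ by the group speed flow by flow); your version is therefore more complete than the paper's. One small inaccuracy in your write-up: $\ell_I+d_{ijf}$ and $\ell_O+d_{ijf}$ need not be the \emph{final} port loads, since flows with larger $\bar{C}$ can still be placed on ports $i$ and $j$ of $h^*$ after $(i,j,f)$. The bound $\tilde{E}\leq(\ell_I+\ell_O+d_{ijf})/s_{h^*}$ nevertheless holds, because the transmission rule re-schedules released incomplete flows in non-decreasing $\bar{C}$ order at every event, so $(i,j,f)$ can only ever be blocked by higher-priority flows, which are precisely those counted in $\ell_I$ and $\ell_O$; it would be cleaner to argue via this priority property than via the (false in general) claim about final loads.
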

\begin{proof}
Assume the last completed flow $(i, j, f)$ is sent via the network core $k$. We have
\begin{eqnarray}\label{eq:21}
\tilde{E} & \leq & \sum_{f\in \mathcal{F}} \frac{L_{if}+L_{jf}}{s(M_{r(f)})}+\frac{d_{ijf}}{s_k} \label{eq:21-1}\\
          & \leq & 4K\bar{E}+\frac{d_{ijf}}{s_k} \label{eq:21-2}\\
					& \leq & 4K\bar{E}+2 \gamma \sum_{k'\in M}\frac{d_{ijf}x_{k'ijf}}{s_k'} \label{eq:21-3}\\
					& \leq & 4K\bar{E}+2 \gamma \bar{E} \label{eq:21-4}\\
					& =    & (4K+2\gamma)\bar{E}. \label{eq:21-5}
\end{eqnarray}
The first term of inequality~(\ref{eq:21-1}) is due to all links $(i, j)$ in the network cores are busy from zero to the start of the last completed flow $(i, j, f)$.
The inequality~(\ref{eq:21-2}) is based on lemma~\ref{lem:lem22} and lemma~\ref{lem:lem23}.
The inequality~(\ref{eq:21-3}) is based on lemma~\ref{lem:lem21}.
The inequality (\ref{eq:21-4}) is obtained by constraints~(\ref{coflow:b}) and (\ref{coflow:h}) in the linear program (\ref{coflow:main}).
\end{proof}

According to lemma~\ref{lem:lem24}, we have the following theorem:
\begin{thm}\label{thm:thm21}
When the speed of network core is between one and $m$, the divisible coflow schedule has makespan at most $(4K+2\gamma)\bar{E}=O(\log m/ \log \log m)\bar{E}$.
\end{thm}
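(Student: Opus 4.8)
The plan is to derive Theorem~\ref{thm:thm21} directly from Lemma~\ref{lem:lem24}, so that essentially all that remains is an asymptotic estimate of the approximation factor. First I would invoke Lemma~\ref{lem:lem24} to obtain $\tilde{E}\leq (4K+2\gamma)\bar{E}$, where $\bar{E}$ is the optimal value of the linear program~(\ref{coflow:main}) and hence a lower bound on the makespan of any feasible schedule. Since $\bar{E}$ lower-bounds the optimum, it suffices to show that the factor satisfies $4K+2\gamma = O(\log m/\log\log m)$.

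The key step is to bound both $K$ and $\gamma$ by $O(\log m/\log\log m)$. For $\gamma$ this is immediate from its definition $\gamma=\log m/\log\log m$. For $K=\lceil \log_\gamma m\rceil$ I would write
\begin{eqnarray*}
K = \left\lceil \frac{\log m}{\log \gamma}\right\rceil,
\end{eqnarray*}
and estimate the denominator via
\begin{eqnarray*}
\log \gamma = \log\left(\frac{\log m}{\log\log m}\right) = \log\log m - \log\log\log m = \Theta(\log\log m)
\end{eqnarray*}
for all sufficiently large $m$. Substituting this back gives $K = \Theta(\log m/\log\log m)$, so that $K$ and $\gamma$ have the same asymptotic order.

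Combining the two estimates, the factor satisfies $4K+2\gamma = O(\log m/\log\log m)$, and therefore $\tilde{E}\leq (4K+2\gamma)\bar{E} = O(\log m/\log\log m)\bar{E}$, which is exactly the claimed bound. The only subtlety — and the step I would be most careful about — is the asymptotic simplification $\log\gamma = \Theta(\log\log m)$: one must check that the lower-order term $\log\log\log m$ is genuinely negligible against $\log\log m$, which is precisely where the standing assumption that $m$ is large enough is used. Everything else is a mechanical substitution once Lemma~\ref{lem:lem24} is in hand.
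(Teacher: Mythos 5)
Your proposal is correct and follows essentially the same route as the paper: the theorem is stated there as an immediate consequence of Lemma~\ref{lem:lem24}, combined with the bounds $\gamma=\log m/\log\log m$ and $K=\lceil \log_{\gamma} m\rceil=O(\log m/\log\log m)$ already established when the network cores were partitioned into groups. Your explicit verification that $\log\gamma=\log\log m-\log\log\log m=\Theta(\log\log m)$ is exactly the (implicit) asymptotic step the paper relies on, so there is no gap.
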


Due to the discarded network cores, we have the following theorem:
\begin{thm}\label{thm:thm22}
When the speed of network core is arbitrary, the divisible coflow schedule has makespan at most $(8K+4\gamma)\bar{E}=O(\log m/ \log \log m)\bar{E}$.
\end{thm}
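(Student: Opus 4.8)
The plan is to reduce the arbitrary-speed case to the normalized setting of Theorem~\ref{thm:thm21} by reusing the two-stage preprocessing already described in Section~\ref{sec:Algorithm2}, and to show that this reduction costs only a factor of two. First I would invoke the first preprocessing stage: discard every network core whose speed is at most $\frac{1}{m}$ times the speed $s_{max}$ of the fastest core, and redirect the $x$-mass that the LP solution $\bar{x}$ placed on those cores onto the single fastest core. After discarding and normalizing, every surviving core has speed in $[1,m)$, which is exactly the hypothesis of Theorem~\ref{thm:thm21}.

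The crux is to verify that this relocation yields a feasible solution to the LP (\ref{coflow:main}) restricted to the surviving cores whose optimal makespan parameter is at most $2\bar{E}$. For the assignment, completion-time, and makespan constraints (\ref{coflow:a}), (\ref{coflow:b}), and (\ref{coflow:h}) this is immediate, since $\sum_{k} x_{kijf}=1$ is preserved by merely moving mass among cores, and shifting mass onto a faster core can only decrease each per-flow transfer time $\sum_{k}\frac{d_{ijf}x_{kijf}}{s_k}$. The delicate constraints are the port-capacity constraints (\ref{coflow:f}) and (\ref{coflow:g}), since piling all discarded mass onto the fastest core raises its per-port load. Here I would use that there are $m$ cores, each discarded one of speed at most $s_{max}/m$, so the total speed of the discarded cores is at most $s_{max}$. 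Consequently, for any fixed input port $i$, the data that $\bar{x}$ routed through the discarded cores on port $i$ is at most $\left(\sum_{k\text{ discarded}}s_k\right)\bar{E}\le s_{max}\bar{E}$, and absorbing this extra data on the fastest core adds at most $\bar{E}$ to the left-hand side of (\ref{coflow:f}) after dividing by $s_{max}$. The identical argument on output ports handles (\ref{coflow:g}). Hence the relocated solution is feasible with makespan parameter at most $2\bar{E}$.

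Finally I would apply Theorem~\ref{thm:thm21} to the surviving instance, whose optimal LP value is at most $2\bar{E}$, obtaining a schedule of makespan at most $(4K+2\gamma)\cdot 2\bar{E}=(8K+4\gamma)\bar{E}$. Because the surviving cores are a subset of the original ones, this schedule remains valid for the arbitrary-speed instance, giving the claimed $O(\log m/\log\log m)\bar{E}$ bound. I expect the only genuine obstacle to be the bookkeeping in the port-capacity step, namely confirming that the aggregated discarded speed is bounded by $s_{max}$ and that this translates into at most an additive $\bar{E}$ per port; everything else is a direct citation of Theorem~\ref{thm:thm21} and exactly mirrors the passage from Theorem~\ref{thm:thm1} to Theorem~\ref{thm:thm2} in the indivisible case.
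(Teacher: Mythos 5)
Your proposal is correct and is essentially the paper's own argument: the paper likewise obtains Theorem~\ref{thm:thm22} by invoking the preprocessing step of Section~\ref{sec:Algorithm2} (discard every core slower than $1/m$ times the fastest, shift its $x$-mass onto the fastest core at the cost of a factor $2$ in the LP makespan) and then applying Theorem~\ref{thm:thm21}. Your write-up merely makes explicit the feasibility check for constraints~(\ref{coflow:f}) and~(\ref{coflow:g}) that the paper leaves implicit, which is a faithful elaboration rather than a different route.
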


\begin{algorithm}
\caption{flow-driven-list-scheduling}
    \begin{algorithmic}[1]
		    \REQUIRE two vectors $\bar{C}\in \mathbb{R}_{\scriptscriptstyle \geq 0}^{n}$ and $\bar{x}\in \mathbb{R}_{\scriptscriptstyle \geq 0}^{n\times m}$
				\FOR{every coflow $f\in \mathcal{F}$}
				    \STATE $q(i, j, f)=\min_{l\in [1,L]} l$ s.t. $\sum_{l=1}^{q(f)}\sum_{k\in \mathcal{M}} x_{kijfl}\geq 1/2$ and $\bar{C}_{ijf}\leq 2^{q(f)}$
						\STATE $\mathcal{F}_{q(i, j, f)}=\mathcal{F}_{q(i, j, f)} \cup \left\{(i, j, f)\right\}$
						\STATE $\alpha_{ijf} =\sum_{l=1}^{q(i, j, f)}\sum_{k\in \mathcal{M}} \bar{x}_{kijfl}$ and $\tilde{x}_{kijf}=\sum_{l=1}^{q(i, j, f)} \frac{\bar{x}_{kijfl}}{\alpha_{ijf}}$ for all $k\in \mathcal{M}$
				\ENDFOR
				\FOR{$l\in [1,L]$}
				    \STATE let $load_{I}(i,h)$ be the load on the $i$-th input port of the network core $h$
				    \STATE let $load_{O}(j,h)$ be the load on the $j$-th output port of the network core $h$
				    \STATE let $\mathcal{A}_{lh}$ be the set of coflows allocated to network core $h$ in the $l$-th interval
				    \STATE both $load_{I}$ and $load_{O}$ are initialized to zero and $\mathcal{A}_{lh}=\emptyset$ for all $h\in [1, m]$
				    \FOR{every flow $(i, j, f)$ in non-decreasing order of $\bar{C}_f$, breaking ties arbitrarily}
				        \STATE $\tilde{x}_{M_kijf}=\sum_{u\in M_k} \tilde{x}_{uijf}$ for all $k=1,\ldots, K$
				        \STATE $\ell_{ijf}=\max_{\ell\in [1,K]} \ell$ s.t. $\sum_{k=\ell}^{K} \tilde{x}_{M_kijf}\geq 1/2$
						    \STATE $r(i, j, f)=\arg_{M_k:\ell_{ijf}\leq k\leq K} \max s(M_k)$
						    \STATE $h^*=\arg \min_{h\in M_{r(i, j, f)}}\frac{1}{s_h}\left(load_{I}(i,h)+load_{O}(j,h)\right)$
						    \STATE $\mathcal{A}_{lh^*}=\mathcal{A}_{lh^*}\cup \left\{(i, j, f)\right\}$
						    \STATE $load_{I}(i,h^*)=load_{I}(i,h^*)+d_{ijf}$ and $load_{O}(j,h^*)=load_{O}(j,h^*)+d_{ijf}$
				    \ENDFOR
				\ENDFOR
		    \FOR{each $k\in \mathcal{M}$ do in parallel}
    				\FOR{$l\in [1,L]$}
				        \STATE wait until the first coflow is released
						    \WHILE{there is some incomplete flow}
						        \STATE for all $(i, j, f)\in \mathcal{A}_{lk}$, list the released and incomplete flows respecting the non-decreasing order in $\bar{C}_{ijf}$
								    \STATE let $L$ be the set of flows in the list
                    \FOR{every flow $(i, j, f)\in L$}
								    		\IF{the link $(i, j)$ is idle}
								    		    \STATE schedule flow $f$
								    		\ENDIF
								    \ENDFOR
								    \WHILE{no new flow is completed or released}
								        \STATE transmit the flows that get scheduled in line 28 at maximum rate $s_k$.
								    \ENDWHILE
						    \ENDWHILE
				    \ENDFOR						
				\ENDFOR	
   \end{algorithmic}
\label{Alg4}
\end{algorithm}

\subsection{An extension for Total Weighted Completion Time}
This section gives an $O(\log m/ \log \log m)$-approximation algorithm for minimizing total weighted completion time, which is based on combining our algorithm for minimizing makespan. The method is the same as Section~\ref{sec:Algorithm2-1}. We can formulate our problem as the following linear programming relaxation.

\begin{subequations}\label{coflow:main:wc}
\begin{align}
&    \text{min} && \sum_{f\in \mathcal{F}}w_f C_{f}                           && \tag{\ref{coflow:main:wc}}\\ 
&    \text{s.t.}. && \sum_{k\in \mathcal{M}} \sum_{l=1}^{L} x_{kijfl}=1 && \forall f\in \mathcal{F}, \forall i\in \mathcal{I}, \forall j\in \mathcal{J}   \label{coflow:wc:a}\\ 
&         && \sum_{k\in \mathcal{M}} \frac{d_{ijf}}{s_k}\sum_{l=1}^{L}x_{kijfl}\leq C_{ijf}-r_f  && \forall f\in \mathcal{F}, \forall i\in \mathcal{I}, \forall j\in \mathcal{J} \label{coflow:wc:b}\\
%&         && \sum_{k\in \mathcal{M}} \frac{L_{jf}}{s_k}\sum_{l=1}^{L}x_{kijfl}\leq C_{ijf}-r_f  && \forall f\in \mathcal{F}, \forall j\in \mathcal{J} \label{coflow:wc:c}\\
&         && \frac{1}{s_k}\sum_{u=1}^{l} \sum_{f\in \mathcal{F}} \sum_{j\in \mathcal{J}} d_{ijf} x_{kijfu}  \leq \tau_{l} && \forall i\in \mathcal{I}, \forall k\in \mathcal{M}, \forall l\in [1, L] \label{coflow:wc:f}\\ 
&         && \frac{1}{s_k}\sum_{u=1}^{l} \sum_{f\in \mathcal{F}} \sum_{i\in \mathcal{I}} d_{ijf} x_{kijfu}  \leq \tau_{l} && \forall j\in \mathcal{J}, \forall k\in \mathcal{M}, \forall l\in [1, L] \label{coflow:wc:g}\\ 
&         && \sum_{l=1}^{L}\tau_{l-1} \sum_{k\in \mathcal{M}} x_{kijfl} \leq C_{ijf} && \forall f\in \mathcal{F}, \forall i\in \mathcal{I}, \forall j\in \mathcal{J} \label{coflow:wc:h}\\ 
&         && C_{ijf} \leq C_{f} && \forall f\in \mathcal{F}, \forall i\in \mathcal{I}, \forall j\in \mathcal{J} \label{coflow:wc:j}\\ 
&         && x_{kijfl}, C_{ijf}, C_{f}\geq 0 && \forall f\in \mathcal{F}, \forall i\in \mathcal{I}, \forall j\in \mathcal{J}, \forall k\in \mathcal{M} \label{coflow:wc:i}
\end{align}   
\end{subequations}
In the LP (\ref{coflow:main:wc}), $x_{kijfl}$ indicates whether or not flow $(i, j, f)$ completes on network core $k$ in the $l$-th interval, $C_{ijf}$ is the completion time of flow $(i, j, f)$ and $C_{f}$ is the completion time of coflow $f$.
The constraint~(\ref{coflow:wc:a}) requires all the flows must be assigned to run at some network core.
The constraint~(\ref{coflow:wc:b}) is that the time required to transmit flow $(i, j, f)$ on link $(i, j)$ cannot exceed the time period between its release and completion time. 
The constraints~(\ref{coflow:wc:f}) and (\ref{coflow:wc:g}) represent capacity limits to time $\tau_{l}$.
The constraint~(\ref{coflow:wc:h}) is a lower bound to the completion time of flow.
The constraint~(\ref{coflow:wc:j}) ensures that the completion time of coflow $f$ is bounded by all its flows.
The constraint~(\ref{coflow:wc:i}) requires the $x$ and $C$ variables to be non-negative.

Our algorithm flow-driven-list-scheduling (described in Algorithm~\ref{Alg4}) is as follows. 
Given a set of coflow $\mathcal{F}$, an optimal solution $\bar{C}$ and $\bar{x}$ can be obtained by the linear program (\ref{coflow:main:wc}). 
Lines 1-5 schedule all flow into time intervals and normalize the value of $\bar{x}$. 
Lines 6-19 are the flow-makespan-list-scheduling algorithm, which schedules the flows in the corresponding time interval to each network core.
Lines 20-36 transmit all coflow. Same as Section~\ref{sec:Algorithm2-1}, we have the following theorem.
\begin{thm}
The flow-driven-list-scheduling has an approximation ratio of, at most, $(64K+32\gamma)=O(\log m/ \log \log m)$.
\end{thm}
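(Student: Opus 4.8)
The plan is to mirror the total-weighted-completion-time reduction of Section~\ref{sec:Algorithm2-1}, but carried out at the flow level and with the coflow completion time recovered through the coupling constraint~(\ref{coflow:wc:j}). I would start from an optimal fractional solution $(\bar{x},\bar{C})$ of the interval-indexed relaxation LP~(\ref{coflow:main:wc}), and for each flow $(i,j,f)$ let $q(i,j,f)$ be the smallest index $l$ for which at least half of the flow's mass $\sum_{k}\sum_{u\le l}\bar{x}_{kijfu}$ lies in the first $l$ intervals and simultaneously $\bar{C}_{ijf}\le 2^{l}$. Grouping the flows by $\mathcal{F}_l=\{(i,j,f):q(i,j,f)=l\}$ and rescaling the truncated mass by $\alpha_{ijf}=\sum_{u\le q(i,j,f)}\sum_{k}\bar{x}_{kijfu}\ge 1/2$ produces, for each fixed $l$, a feasible fractional assignment $\tilde{x}$ for the pure makespan problem on $\mathcal{F}_l$.

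Next I would bound the makespan of each group. Because constraints~(\ref{coflow:wc:f}) and~(\ref{coflow:wc:g}) cap the cumulative port load on every core by $\tau_l=2^{l}$ and the rescaling inflates every coordinate by at most a factor $1/\alpha_{ijf}\le 2$, the fractional makespan of the $\mathcal{F}_l$ subinstance is at most $2\tau_l=2^{l+1}$. Feeding $\tilde{x}$ into flow-makespan-list-scheduling and invoking the makespan guarantee of Theorem~\ref{thm:thm21} (with $\bar{R}=4K+2\gamma$) yields a schedule fragment for $\mathcal{F}_l$ of length at most $\bar{R}\,2^{l+1}$. Running the fragments one after another, the fragment for $\mathcal{F}_l$ occupies the window from $\bar{R}\sum_{u\le l}2^{u}$ to $\bar{R}\sum_{u\le l+1}2^{u}$, so every flow in $\mathcal{F}_l$ completes by time $\bar{R}(2^{l+2}-1)\le 4\bar{R}\,2^{l}$; release times cause no extra delay because $q(i,j,f)=l$ together with constraint~(\ref{coflow:wc:b}) gives $r_f\le\bar{C}_{ijf}\le 2^{l}\le \bar{R}\sum_{u\le l}2^{u}$, so all of $\mathcal{F}_l$ is available when its window opens.

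I would then relate $2^{l}$ back to the LP completion time. Using $\sum_{u\le l}\sum_{k}\bar{x}_{kijfu}\ge 1/2$ and the lower-bound constraint~(\ref{coflow:wc:h}), the same telescoping chain as in Section~\ref{sec:Algorithm2-1} gives $\tau_{l-1}/2\le\bar{C}_{ijf}$, hence $2^{l}\le 4\bar{C}_{ijf}$; combined with the previous paragraph this shows each flow finishes by $16\bar{R}\,\bar{C}_{ijf}$. The only genuinely new ingredient over the indivisible case is that a single coflow's flows may land in different groups $\mathcal{F}_l$: I would absorb this by taking $C_f=\max_{i,j}C_{ijf}\le 16\bar{R}\max_{i,j}\bar{C}_{ijf}\le 16\bar{R}\,\bar{C}_f$, where the last inequality is exactly constraint~(\ref{coflow:wc:j}). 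Summing with the weights then gives $\sum_{f}w_f C_f\le 16\bar{R}\sum_{f}w_f\bar{C}_f\le 16(4K+2\gamma)\,\mathrm{OPT}=(64K+32\gamma)\,\mathrm{OPT}$, since LP~(\ref{coflow:main:wc}) is a relaxation of the true optimum.

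The step I expect to require the most care is the per-group makespan bound: one must verify that the rescaled solution $\tilde{x}$ is genuinely feasible for the makespan relaxation LP~(\ref{coflow:main}) restricted to $\mathcal{F}_l$ at objective value $2^{l+1}$, so that Theorem~\ref{thm:thm21} applies verbatim, and that concatenating the fragments does not accumulate load across windows. The latter holds precisely because distinct fragments run in disjoint time windows and the per-interval loads $load_I,load_O$ are reset at the start of each interval in Algorithm~\ref{Alg4}, so the guarantee of each fragment is unaffected by the others.
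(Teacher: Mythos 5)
Your proposal is correct and follows essentially the same route as the paper: the paper's own proof of this theorem is simply the Chudak--Shmoys interval-decomposition argument of Section~\ref{sec:Algorithm2-1} transplanted to the flow level, which is exactly what you carry out --- grouping flows by $q(i,j,f)$, rescaling by $\alpha_{ijf}\geq 1/2$, invoking the makespan guarantee $\bar{R}=4K+2\gamma$ of Theorem~\ref{thm:thm21} on each fragment, showing $2^{l}\leq 4\bar{C}_{ijf}$, and passing from flow to coflow completion times via constraint~(\ref{coflow:wc:j}). Your explicit treatment of the release times and of the coupling constraint~(\ref{coflow:wc:j}) makes precise details the paper leaves implicit, and the constant $16\bar{R}=64K+32\gamma$ matches the claimed bound.
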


\section{Results and Discussion}\label{sec:Results}
This section compares the approximation ratio of the proposed algorithm to that of the previous algorithm. 
We compares with the algorithm of Huang \textit{et al.}~\cite{Huang2020}, which schedules a single coflow on a heterogeneous parallel network.
In the scheduling single divisible coflow problem, our algorithm achieves an approximation ratio of $(4K+2\gamma)$ where $\gamma= \log m/\log \log m$ and $K= \left\lceil \log_{\gamma} m \right\rceil$.
Figure~\ref{fig:ratio1} presents the numerical results concerning the approximation ratio of algorithms.
When $m\geq 25$, the proposed algorithm outperforms the algorithm in \cite{Huang2020}.

\begin{figure}[!ht]
    \centering
        \includegraphics[width=6.4in]{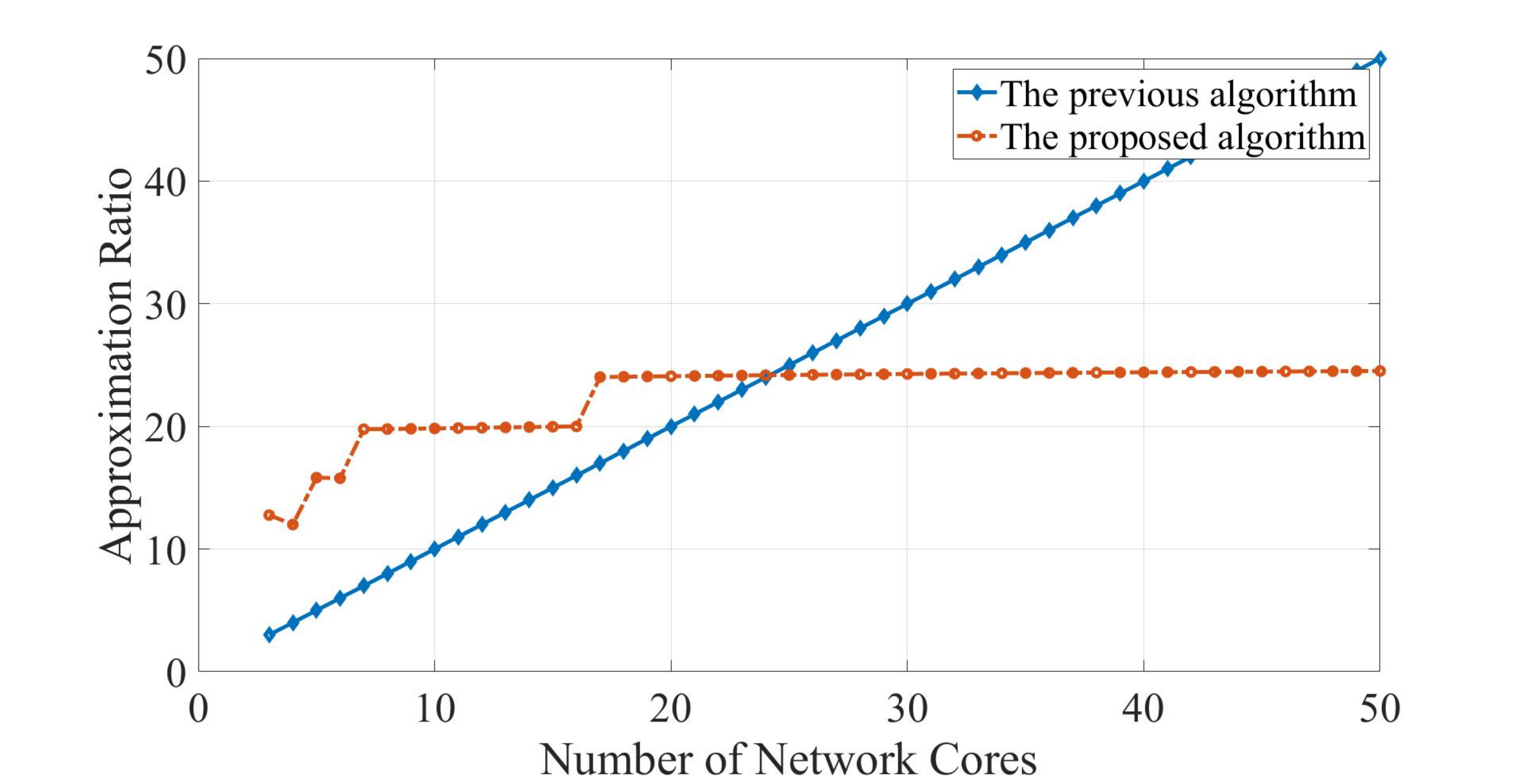}
    \caption{The approximation ratio between the algorithm in \cite{Huang2020} and the proposed algorithm.}
    \label{fig:ratio1}
\end{figure}

\section{Concluding Remarks}\label{sec:Conclusion}
With the growth of data centers, the scheduling of the single-core model is no longer sufficient. Therefore, we consider scheduling coflow problems in heterogeneous parallel networks. In this paper, two polynomial-time approximation algorithms are developed for scheduling divisible and indivisible coflows in heterogeneous parallel networks, respectively. Considering the divisible coflow scheduling problem, the proposed algorithm achieve an approximation ratio of $O(\log m/ \log \log m)$ with arbitrary release times, where $m$ is the number of network cores. On the other hand, when coflow is indivisible, the proposed algorithm achieve an approximation ratio of $O\left(m\left(\log m/ \log \log m\right)^2\right)$ with arbitrary release times.

\end{document}